\documentclass[a4paper,12pt]{elsarticle}
\usepackage{amssymb}
\usepackage{amsmath}
\usepackage{latexsym}
\usepackage{comment}
\usepackage{graphicx}
\usepackage{verbatim}
\usepackage[color=pink]{todonotes}
\graphicspath{{graphics/}}
\usepackage{tikz}
\usepackage[utf8]{inputenc}
\usepackage[T1]{fontenc}
\usepackage[spanish]{babel}

\newcommand*\circled[1]{\tikz[baseline=(char.base)]{
            \node[shape=circle,draw,inner sep=2pt] (char) {#1};}}
 \newtheorem{teor}{Theorem}
 \newtheorem{lema}{Lemma}

 \newtheorem{remark}{Remark}

\newcommand{\R}{\mathbb{R}}

\begin{document}
\begin{frontmatter}


\title{Multiform Longest Edge Bisection of Tetrahedra via Sextuple Permutations}

\author[agus]{Agusti\'n Trujillo}
\ead{agustin.trujillo@ulpgc.es}
\address[agus]{University of Las  Palmas de Gran Canaria (ULPGC), Spain }

\author[jose]{Jos\'e Pablo Su\'arez\corref{mycorrespondingauthor}}
\ead{jose.suarez@ulpgc.es}
\cortext[mycorrespondingauthor]{Corresponding author}
\address[jose]{IUMA. University of Las  Palmas de Gran Canaria (ULPGC), Spain }

\author[tania]{Tania Moreno-García\corref{l3}}
\ead{taniamgholguin@gmail.com}
\address[tania]{Facultad de Informatica y Matematica, Universidad de Holguin, Holguin, Cuba}

\begin{abstract}
We introduce a new formulation of the Longest Edge Bisection (LEB) of tetrahedra entirely in sextuple space $\mathbb{R}^6$, where tetrahedra are represented by the squares of their edge lengths. This representation renders the LEB refinement equations fully linear and eliminates the need for coordinate-based data structures.

A central difficulty in three-dimensional LEB arises when a tetrahedron possesses multiple longest edges, making the refinement rule intrinsically multivalued. We formalize this phenomenon through the notion of \emph{Multiform Longest Edge Bisection} (MLEB), which systematically explores all admissible longest-edge choices. To encode this multivalued behavior, we introduce the concept of \emph{bisection patterns}, defined as sequences of sextuple permutations governing the refinement process.

We prove that the set of sextuples sharing a common LEB pattern forms a convex region in $\mathbb{R}^6$. For structurally significant families of tetrahedra, including the $R_1^+$ family and the Liu–Joe family, we show that the infinite refinement tree collapses into a finite directed graph with eight states. Remarkably, both families are governed by the same graph, differing only in their initial state.

This directed-graph formulation provides a unified combinatorial description of the refinement process and offers an efficient computational framework for deep iterative LEB analysis.
\end{abstract}

\begin{keyword}
Tetrahedra \sep Longest Edge Bisection \sep Bisection pattern
\end{keyword}

\end{frontmatter}

\section{Introduction}

The Longest Edge Bisection (LEB) is one of the most widely used refinement strategies for simplicial meshes in two and three dimensions. In three dimensions, the subdivision of a tetrahedron is performed by connecting the midpoint of its longest edge to the opposite edge, generating two child tetrahedra. Iterating this process produces a refinement tree whose geometric and combinatorial properties have been extensively studied in the context of adaptive finite element methods.

Traditionally, LEB algorithms rely on coordinate-based representations of tetrahedra. In contrast, we adopt a purely edge-based description, representing each tetrahedron by a sextuple in $\mathbb{R}^6$ consisting of the squares of its six edge lengths. This representation is invariant under rigid motions and eliminates orientation dependencies. Moreover, when squared edge lengths are used, the refinement equations become linear, significantly simplifying both theoretical analysis and computational implementation. 
The idea of representing a tetrahedron by its edges is not new, see for example \cite{WirDre} and the works referenced therein. Lastly new focus on the edge sextuples for the LEB for tetrahedra has been presented in \cite{SuTrTa, AMC_2024}. 

Some interesting properties of sextuples are as follows. As a tetrahedron has exactly six edges, it can be fully represented by a sextuple of edge lengths (six positive real numbers). In contrast, four vertex coordinates are needed, which requires 12 real numbers. Edge-length representation is independent of orientation, position, or coordinate system. Then sextuples are useful for comparing tetrahedra shapes purely based on geometry, \cite{SuTrTa, AMC_2024, PadronTrujilloSuarez2025}. 

Most common geometric properties such as volume, face areas, and dihedral angles can be computed directly from edge lengths using formulas like the Cayley-Menger determinant.

A subtle but fundamental difficulty in the three-dimensional LEB arises when a tetrahedron possesses two or more edges of equal maximal length. In such cases, the refinement edge is not uniquely determined, and the bisection rule becomes intrinsically multivalued. Classical approaches, including those in \cite{Sik, Kos, RivLev, BeGiRo, HaKoKri_2014}, implicitly select one admissible longest edge according to a normalization convention, but they do not explicitly analyze the combinatorial branching induced by ties. However, a complete understanding of similarity classes, degeneration phenomena, and long-term refinement behavior requires exploring all admissible choices.

In \cite{Casado2015} a  global optimization  method is employed to overcome these difficulties in $n$-simplex, although it shows high computational cost of solving this combinatorial problem.

In this work we formalize this situation by introducing the notion of \emph{Multiform Longest Edge Bisection} (MLEB). Instead of enforcing a single deterministic choice, we interpret the LEB as a multivalued refinement process whenever equal longest edges occur. To encode this process algebraically, we introduce \emph{bisection patterns}, defined as sequences of sextuple permutations that determine which edge is bisected at each step. This leads naturally to a directed-graph representation of the refinement dynamics.

Our main contributions can be summarized as follows. We formulate the Longest Edge Bisection entirely in $\mathbb{R}^6$ using sextuples of squared edge lengths, which leads to a fully linear refinement scheme and removes the need for coordinate-based representations. We introduce the concept of Multiform Longest Edge Bisection (MLEB), providing a rigorous framework to handle the intrinsic multivalued nature of the refinement when equal longest edges occur. Within this framework, we define bisection patterns as permutation-driven rules governing the refinement process and show that the set of sextuples sharing a common pattern forms a convex subset of $\mathbb{R}^6$. Furthermore, we prove that for important families of tetrahedra, including the $R_1^+$ family and the Liu–Joe family, the infinite refinement tree collapses into a finite directed graph with eight states, and that both families are governed by exactly the same graph structure, differing only in the initial state. This directed-graph formulation provides a compact combinatorial description of the refinement process and yields an efficient computational framework for deep iterative LEB analysis.

\section{Representation of tetrahedra in $\R^6$}

We first see the case in $\R^3$ for triangles. Let $x,y,z$ be the squares of the lengths of the sides of an arbitrary triangle. Let $\textbf{P}=(x,y,z)$ be a point in $\R^3$ that represents this triangle. According to the \emph{triangle inequality}, we know that the values $x,y,z$ must satisfy the following conditions:
\begin{equation*}
\label{eq_triangle_ineq}
\sqrt{x}+\sqrt{y}\geq\sqrt{z}
\end{equation*}
\begin{equation}
\sqrt{x}+\sqrt{z}\geq\sqrt{y}
\end{equation}
\begin{equation*}
\sqrt{y}+\sqrt{z}\geq\sqrt{x}
\end{equation*}
Although these should be considered as strict inequalities, we have added equality to them in order to also consider triangles with zero area. We will refer to these triangles as \emph{degenerated triplets}.

Therefore, not all points in $\R^3$ represent a real triangle. In Figure \ref{trianglesInR3} the boundary of the region in space where all the points representing valid triangles are located is shown. We will call these points \emph{triangular triplets}. The points outside this region will be referred to as \emph{non-triangular triplets}. The boundary between the two sets is formed by the \emph{degenerated triplets}, which represent triangles with zero area. This statement is proven at the end of this section in Theorem \ref{teorema1} for the case of representing tetrahedra as sextuples in $\mathbb{R}^6$. 


\begin{figure}[h]
\centering
\leavevmode 
\includegraphics[scale=.4]{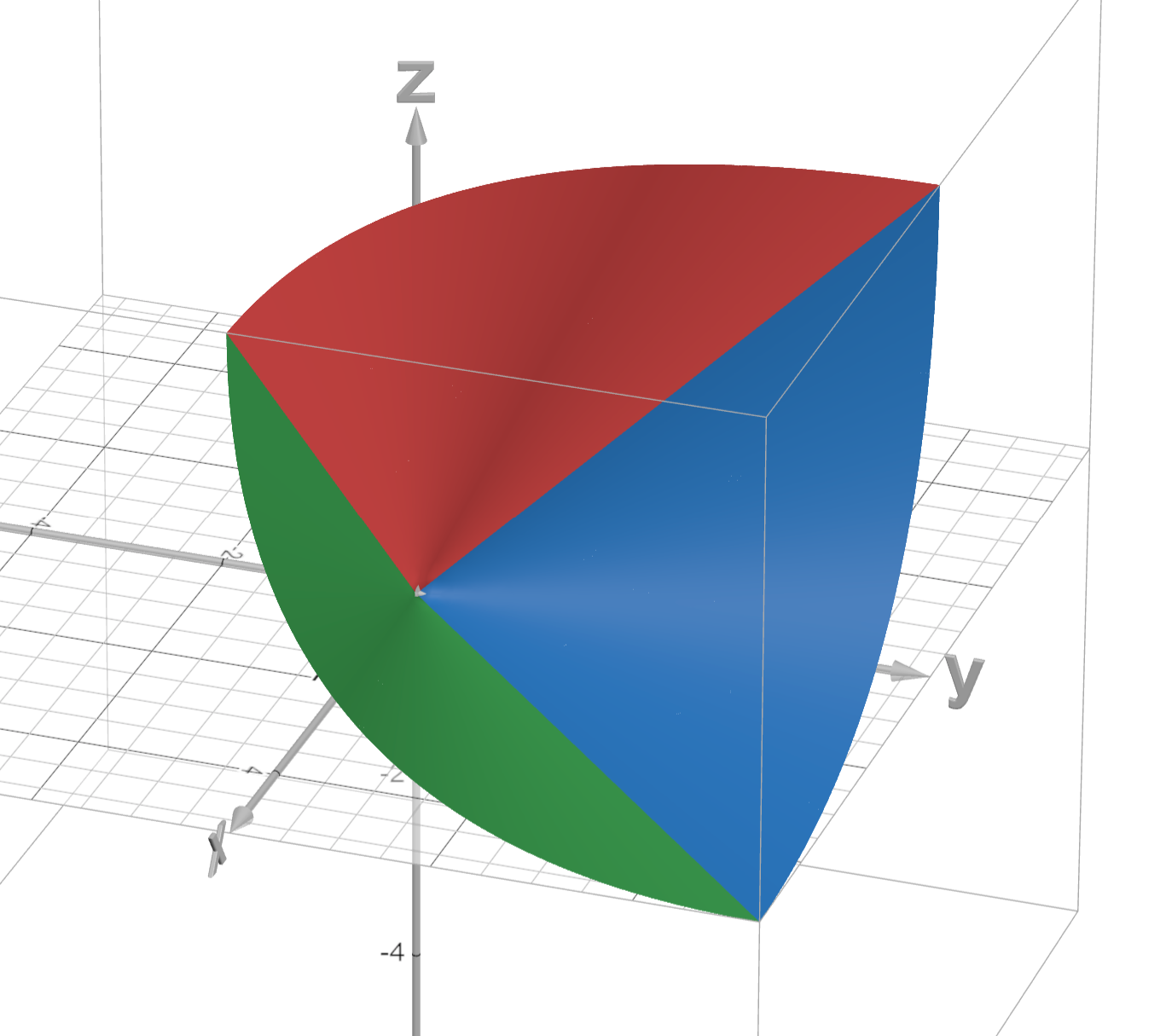}
\caption{Border of the region of triangular triplets. Points on this surface represent degenerated triangles.}
\label{trianglesInR3}
\end{figure}


Following the same nomenclature, let us represent an arbitrary tetrahedron by the squares of the lengths of its 6 edges. Let $\R^6$ be the real linear space with the base $\{{\textbf{e}_1},\,\textbf{e}_2,\,\textbf{e}_3,\,\textbf{e}_4,\,
\textbf{e}_5,\,\textbf{e}_6\}$, where $\textbf{e}_1=(1,\,0,\,0,\,0,\,0,\,0)$, $\textbf{e}_2=(0,\,1,\,0,\,0,\,0,\,0)$,..., 
$\textbf{e}_6=(0,\,0,\,0,\,0,\,0,\,1)$.\\
Then any element $\textbf{x}=(x_1,\,x_2,\,x_3,\,x_4,\,x_5,\,x_6)$ can be written as 
\[
\textbf{x}=x_1\textbf{e}_1+x_2\textbf{e}_2+x_3\textbf{e}_3+x_4\textbf{e}_4+x_5\textbf{e}_5+x_6\textbf{e}_6.
\]
We will say that an element $\textbf{x}\in\R^6$ is a \emph{tetrahedral sextuple} if there 
exists a tetrahedron, $\textbf{T}(\textbf{x})$,  such that the components of $\textbf{x}$ are the squares of its edge lengths, and furthermore the pairs $(x_1,x_6)$, $(x_2,x_4)$ and $(x_3,x_5)$ are related to pairs of opposite edges, with $(x_1,x_2,x_3)$ being one of the faces of the tetrahedron. This specific order of the components of the sextuple has been extracted from \cite{SuTrTa}.

Two conditions must be met for a sextuple $\textbf{x}$ to be a \emph{tetrahedral sextuple}, following the theorem in \cite{WirDre}. The first condition is that $\textbf{x}$ must be \emph{facial}, that is, the four faces of the tetrahedron $\textbf{T}(\textbf{x})$ must satisfy the \emph{triangle inequality} (equation \ref{eq_triangle_ineq}). If the three inequalities for the four faces of $\textbf{T}(\textbf{x})$ hold strictly, that is, none of them are equal, we will say that $\textbf{x}$ is \emph{strictly facial}.

The second condition is that the \emph{Cayley-Menger determinant} must be greater than or equal to zero. The function that assign the tetrahedron volume through the Cayley-Menger determinant can be extended naturally to any sextuple in $R^6$. This is because the determinant can be formulated to any sextuple, even in the case that the determinant cannot be interpreted as a geometric volume. This function, which we shall call the \emph{generalized volume function} $V(\textbf{x})$, is defined as follows:

\begin{equation}
V(\textbf{x})=V(x_1,\,x_2,\,\cdots,\, x_6)=\frac{1}{288}\begin{vmatrix}
0 & 1 & 1 & 1 & 1 \\
1 & 0 & x_1 & x_2 & x_5 \\
1 & x_1 & 0 & x_3 & x_4\\
1 & x_2 & x_3 & 0 & x_6\\
1 & x_5 & x_4 & x_6 & 0
\end{vmatrix}
\label{Determinant}
\end{equation}



The table in Figure \ref{fig:table_sextuples-set} shows the different situations of an element $\textbf{x}\in\R^6$ based on the value of the two indicated conditions. If $\textbf{x}$ is \emph{strictly facial} and its determinant is positive, we will say that $\textbf{x}$ is a \emph{shapely sextuple}, representing a tetrahedron with positive volume (in green color). If the determinant is zero, and $\textbf{x}$ is facial or strictly facial, then $\textbf{x}$ is a \emph{degenerated sextuple} (in yellow color). Finally, if $\textbf{x}$ is non-facial or its determinant is negative, then it is a \emph{non-tetrahedral sextuple} (in red color).

\begin{figure}[h]
\centering
\leavevmode 
\includegraphics[scale=.55]{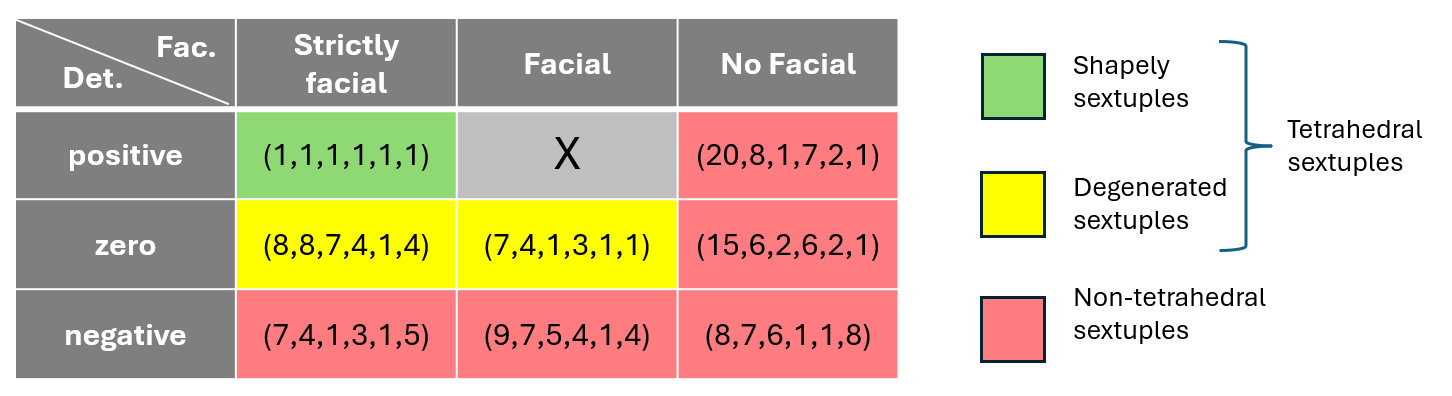}
\caption{Characterization of sextuples in $\R^6$ based on whether they are facial and the sign of the Cayley-Menger determinant. Each cell includes a sextuple as an example. The case of facial sextuples with a positive determinant is empty because it is not possible.}
\label{fig:table_sextuples-set}
\end{figure}

Also in this table, an example of a sextuple for each cell is shown. For instance, the sextuple $(1,1,1,1,1,1)$ is a \emph{shapely sextuple}, as it represents a regular tetrahedron with side length $1$, which is strictly facial and has positive volume. The sextuple $(8,8,7,4,1,4)$ is a \emph{degenerated sextuple}, because it is strictly facial and its determinant is zero. This sextuple has its four vertices coplanar, and therefore its volume is zero (see Figure \ref{fig:degenerated_sextuples}a). The sextuple $(7,4,1,3,1,1)$ is also a \emph{degenerated sextuple}, because it is facial but not strictly, and its determinant is zero. This sextuple has one face with zero area, meaning three collinear vertices, and therefore its volume is also zero (see Figure \ref{fig:degenerated_sextuples}b). Finally, there is no sextuple that is facial but not strictly, and at the same time has a positive determinant. This is not possible because having 3 collinear vertices makes it impossible to have a positive volume.

\begin{figure}[h]
\centering
\leavevmode 
\includegraphics[scale=.35]{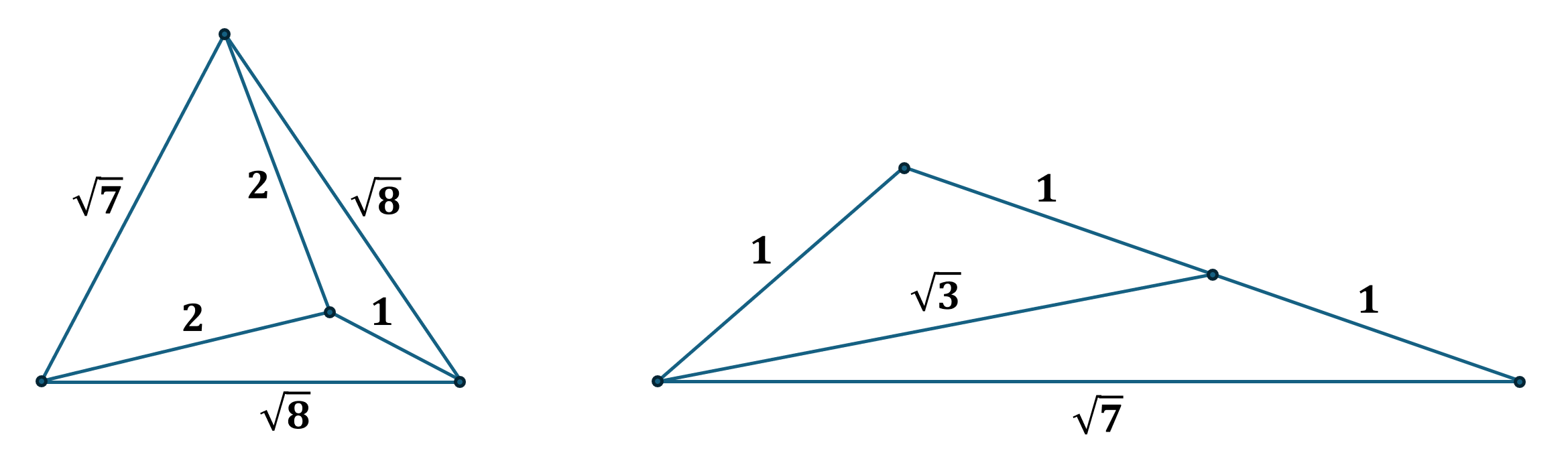}
\caption{Two examples of degenerated sextuples: $(8,8,7,4,1,4)$ on the left side; $(7,4,1,3,1,1)$ on the right side}
\label{fig:degenerated_sextuples}
\end{figure}

Therefore, we can draw an analogy with the case of representing triangles as points in $\R^3$, and state that not all points in $\R^6$ represent a real tetrahedron. Specifically, all sextuples that are not facial or whose Cayley-Menger determinant is negative, represent the region of \emph{non-tetrahedral sextuples} (red color in Figure \ref{fig:table_sextuples-set}). All remaining sextuples form the region of \emph{tetrahedral sextuples}. The boundary of both regions in $\R^6$ is a hypersurface
formed by the \emph{degenerated sextuples}, which represent tetrahedra with zero volume (yellow color in the table). The interior of the set of \emph{tetrahedral sextuples} is composed of the \emph{shapely sextuples}, which have positive volume (green color in the table).

\begin{lema}\label{closet} The set of all facial sextuples in $\R^6$ is a closed set.
\end{lema}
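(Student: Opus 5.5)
The plan is to write the set of facial sextuples as a finite intersection of closed sets, each the preimage of a closed set under a continuous map. Being facial means that for each of the four faces of $\textbf{T}(\textbf{x})$ the three squared edge lengths satisfy the triangle inequality (\ref{eq_triangle_ineq}). With the ordering fixed above, the faces are $(x_1,x_2,x_3)$, $(x_1,x_4,x_5)$, $(x_2,x_5,x_6)$ and $(x_3,x_4,x_6)$. One checks from the Cayley--Menger matrix in (\ref{Determinant}) that each such triple is the set of edges through three of the four vertices. This yields twelve inequalities of the form
\[
\sqrt{x_i}+\sqrt{x_j}-\sqrt{x_k}\geq 0 .
\]

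First I fix the domain. Facial sextuples implicitly have nonnegative components, since the square roots must make sense. So the ambient set is $[0,\infty)^6=\bigcap_{i=1}^6\{x_i\geq 0\}$, which is closed in $\R^6$ as an intersection of closed half-spaces.

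Next I take each of the twelve functions $f_{ijk}(\textbf{x})=\sqrt{x_i}+\sqrt{x_j}-\sqrt{x_k}$. Each is continuous on $[0,\infty)^6$, because $t\mapsto\sqrt{t}$ is continuous on $[0,\infty)$ and sums of continuous functions are continuous. Hence $f_{ijk}^{-1}([0,\infty))$ is closed relative to $[0,\infty)^6$. Since $[0,\infty)^6$ is itself closed in $\R^6$, these sets are closed in $\R^6$. The set of facial sextuples is the intersection of these twelve sets, and a finite (indeed any) intersection of closed sets is closed.

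If one prefers to avoid relative topology, there is an alternative using sequences. Take $\textbf{x}^{(n)}\to\textbf{x}$ with every $\textbf{x}^{(n)}$ facial. Then $x^{(n)}_i\geq0$ gives $x_i\geq0$, and continuity of the square root lets one pass to the limit in each non-strict inequality. The only point needing care, and the only mild obstacle, is that the defining inequalities are non-strict and involve $\sqrt{\cdot}$, which is defined only on $[0,\infty)$. This is why nonnegativity must be part of the definition. I will state it explicitly, noting that this is exactly why the paper included equality in (\ref{eq_triangle_ineq}): with strict inequalities the set would not be closed.
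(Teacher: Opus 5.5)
Your proposal is correct and follows essentially the same route as the paper. The paper also writes the facial set as a finite intersection of the sets $\{g_{ijk}\geq 0\}$, with $g_{ijk}(X)=\sqrt{X_i}+\sqrt{X_j}-\sqrt{X_k}$ continuous, and excludes sextuples with a negative component as non-facial by definition; your remark that these preimages are only relatively closed in $[0,\infty)^6$, which is itself closed in $\R^6$, together with your explicit list of the four faces, makes precise a step the paper leaves implicit.
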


\noindent \textbf{Proof:}
  Let be  $X=(X_1,\,X_2,\cdots,\, X_6)$ a sextuple in $\R^6$ with $X_i \geq 0$ $\forall i \in {1,2,\cdots,\,6 } $. We can write the condition of being facial as following:

  \begin{eqnarray}\label{eq3}
  \begin{split}
      \sqrt X_1 + \sqrt X_2 & \geq \sqrt X_3 \\
    \sqrt X_2 + \sqrt X_3 & \geq \sqrt X_1   \\
     \sqrt X_1 + \sqrt X_3 & \geq \sqrt X_2  \\
     \cdots \\
     \sqrt X_5 + \sqrt X_6 & \geq \sqrt X_2 
     \end{split}
  \end{eqnarray}

Let $g_{ijk}(X)= \sqrt X_i + \sqrt X_j - \sqrt X_k  $ where $(i,j,k) \in (1,\,2,\cdots,\, 6)$. Note that $g_{ijk}(X)$ are continuous functions.
Given that the pre-image of a closed set under a continuous function is closed, it follows that the pre-image of $[0, \infty)$ under $g_{ijk}(X)$ is closed.
The system in Equation \ref{eq3} is made up finite inequalities in the form $g_{ijk}(X) \geq 0$.
Since the intersection of a finite number of closed sets is closed, it follows that the set of facial sextuples with non-negative components is closed. Sextuples with at least one negative component are not facial (by definition). Then, the theorem is proved.
\hfill \qed

\begin{lema}\label{DS2} The set of the sextuples $S$ such that $D(S) \geq 0$ is closed.
\end{lema}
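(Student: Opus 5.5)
The plan is to mirror the argument of Lemma \ref{closet}, using the generalized volume function instead of the triangle-inequality functions $g_{ijk}$. I interpret $D(S)$ as the Cayley--Menger determinant in (\ref{Determinant}), viewed as a function on all of $\R^6$; equivalently $D(S)=288\,V(S)$. Since $288>0$, the sets $\{D\geq 0\}$ and $\{V\geq 0\}$ coincide, so it does not matter which normalization is used.

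The first step is to show that $D:\R^6\to\R$ is continuous. Expanding the $5\times 5$ determinant by the Leibniz formula gives a finite sum of signed products of its entries. Each entry is either a constant ($0$ or $1$) or one of the coordinates $x_1,\dots,x_6$. Hence $D$ is a polynomial in $x_1,\dots,x_6$, of degree at most three in fact, since each term is a product of constants and coordinates. Polynomials are continuous on $\R^6$.

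The key difference from Lemma \ref{closet} is that no square roots appear here. So there is no need to restrict to nonnegative components, and $D$ is defined and continuous on the whole space.

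The second step is topological. The set in question is $D^{-1}([0,\infty))$. The interval $[0,\infty)$ is closed in $\R$, and the preimage of a closed set under a continuous map is closed. Therefore $\{S\in\R^6 : D(S)\geq 0\}$ is closed in $\R^6$, which proves the lemma.

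I see no real obstacle. The only point that needs care is the polynomial claim, and it follows directly from the definition of the determinant, so no explicit expansion is needed. If one wanted the closed form, it could be quoted, but the argument does not require it.
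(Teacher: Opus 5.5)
Your proof is correct and follows the paper's argument: $D$ is a polynomial in the six coordinates, so it is continuous on all of $\R^6$, and therefore $\{S : D(S)\geq 0\}=D^{-1}([0,\infty))$ is closed. Your extra remarks are accurate details that the paper's one-line proof leaves implicit: the degree is at most three, and no restriction to nonnegative components is needed because no square roots appear.
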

\noindent \textbf{Proof:} 

We follow as in the proof of Lemma \ref{closet} that $D(S)$ is a continue function, since it is a polynomial function.

\hfill \qed

\begin{lema} The set of the facial sextuples such that $D(S) \geq 0$ is closed.
    
\end{lema}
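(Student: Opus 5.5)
The plan is to write the set in question as the intersection of the two sets handled in Lemma \ref{closet} and Lemma \ref{DS2}, and then use the fact that an intersection of two closed sets is closed. Let $F\subset\R^6$ be the set of facial sextuples and $P=\{S\in\R^6 : D(S)\geq 0\}$, where $D$ is the Cayley--Menger determinant in (\ref{Determinant}). The set of facial sextuples with $D(S)\geq 0$ is exactly $F\cap P$. Lemma \ref{closet} says $F$ is closed, and Lemma \ref{DS2} says $P$ is closed, so $F\cap P$ is closed.

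Alternatively, one can argue with sequences, which avoids any question about the domain of the square roots. Take $S^{(n)}\in F\cap P$ with $S^{(n)}\to S$.

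\begin{itemize}
\item Each $S^{(n)}$ has nonnegative components, so the limit $S$ does too.
\item The functions $g_{ijk}$ are continuous on the closed orthant $[0,\infty)^6$. Passing to the limit in $g_{ijk}(S^{(n)})\geq 0$ gives $g_{ijk}(S)\geq 0$, so $S$ is facial.
\item $D$ is a polynomial and hence continuous, so $D(S)=\lim D(S^{(n)})\geq 0$.
\end{itemize}

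Thus $S\in F\cap P$, and the set is closed.

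The only delicate point is that $g_{ijk}$ is defined only where the square roots make sense. The closedness of $F$ therefore has to be understood in all of $\R^6$, not just inside the orthant. This is handled by the fact that $[0,\infty)^6$ is itself closed in $\R^6$, so a set that is closed relative to it is closed in $\R^6$. The same fact covers the first step of the sequence argument. Beyond this, the proof is a direct combination of the two previous lemmas.
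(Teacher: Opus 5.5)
Your proof is correct and matches the paper's. The paper also writes the set as the intersection of the facial set (closed by Lemma \ref{closet}) and $\{S : D(S)\geq 0\}$ (closed by Lemma \ref{DS2}), and concludes because an intersection of two closed sets is closed; your remark that closedness relative to the closed orthant $[0,\infty)^6$ gives closedness in $\R^6$ just spells out a step the paper's proof of Lemma \ref{closet} treats only briefly.
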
\label{DScloset}
\noindent \textbf{Proof:} 

Recall that the intersection of two closet sets is a also closet. Then using same reasoning as in  Lemma \ref{closet} and Lemma \ref{DS2} the lemma is proved.
\hfill \qed

\begin{teor} \label{teorema1}
Let $\cal{T}$ be the set of all tetrahedral sextuples in $\R^6$. Consider the set $\cal{Z}\subset \cal{T}$ of all degenerated tetrahedral sextuples. Then $\cal{Z}$ is the boundary of $\cal{T}$.
\end{teor}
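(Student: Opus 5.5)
We prove the two inclusions $\partial\mathcal{T}\subseteq\mathcal{Z}$ and $\mathcal{Z}\subseteq\partial\mathcal{T}$. By the previous lemma (facial sextuples with $D(S)\geq 0$), $\mathcal{T}$ is closed, so $\partial\mathcal{T}=\mathcal{T}\setminus\operatorname{int}\mathcal{T}$. Hence it suffices to show two things:
\begin{itemize}
\item[(a)] every shapely sextuple lies in $\operatorname{int}\mathcal{T}$;
\item[(b)] no degenerated sextuple lies in $\operatorname{int}\mathcal{T}$.
\end{itemize}
Together these give $\operatorname{int}\mathcal{T}$ equal to the set of shapely sextuples and $\partial\mathcal{T}=\mathcal{Z}$. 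This uses the trichotomy in Figure \ref{fig:table_sextuples-set}: a tetrahedral sextuple is either shapely or degenerated.

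For (a), let $\textbf{x}$ be shapely. Then every component is positive and all twelve inequalities $g_{ijk}(\textbf{x})>0$ hold strictly. Also $V(\textbf{x})>0$. Positivity of the components, each $g_{ijk}$, and $V$ are all continuous near $\textbf{x}$. So there is a ball around $\textbf{x}$ on which every component stays positive, every $g_{ijk}$ stays positive, and $V$ stays positive. Every point of that ball is strictly facial with positive determinant, hence shapely and in $\mathcal{T}$. Therefore $\textbf{x}\in\operatorname{int}\mathcal{T}$.

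For (b), let $\textbf{x}\in\mathcal{Z}$, so $\textbf{x}$ is facial and $V(\textbf{x})=0$. We need points arbitrarily close to $\textbf{x}$ lying outside $\mathcal{T}$. The simplest choice is to perturb so that $V<0$, or so that facialness fails. We split into two cases.

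\emph{Case 1: $\textbf{x}$ is not strictly facial.} Some $g_{ijk}(\textbf{x})=0$. Increase $x_k$ slightly to $x_k+\varepsilon$. Then $g_{ijk}<0$, so the perturbed point is non-facial and not in $\mathcal{T}$. The case where some component is zero is handled in the same way, since a slightly negative component is non-facial by definition.

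\emph{Case 2: $\textbf{x}$ is strictly facial with $V(\textbf{x})=0$.} Here the four vertices are coplanar and no three are collinear. We must show $V$ takes negative values arbitrarily near $\textbf{x}$; strict facialness persists nearby, so such points are not in $\mathcal{T}$. Since $V$ is a polynomial, it suffices to find a direction $\textbf{v}$ with $\nabla V(\textbf{x})\cdot\textbf{v}\neq 0$, and then move along $\pm\textbf{v}$.

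A geometric argument shows the gradient does not vanish. Realize $\textbf{x}$ as a planar quadrilateral configuration. The partial derivative of the Cayley--Menger determinant with respect to a squared edge length $x_i$ is, up to a constant factor, a cofactor. For a planar configuration this cofactor equals $-\tfrac{1}{16}$ times a product of the areas of the two faces not containing edge $i$, up to normalization.

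Because $\textbf{x}$ is strictly facial, all face areas are positive. Hence $\partial V/\partial x_i\neq 0$, and we may take $\textbf{v}=\textbf{e}_i$. Equivalently, and more elementarily: the volume-squared near a flat configuration behaves like $-c\,\delta^2$ when one edge is stretched by $\delta$ beyond its planar value, where stretching means lengthening the edge between two vertices that face each other across the planar diagonal. Either route gives points with $V<0$ arbitrarily close to $\textbf{x}$.

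The main obstacle is Case 2, namely proving rigorously that $\nabla V(\textbf{x})\neq 0$ at a strictly facial flat sextuple. This requires computing the cofactor identity for the derivative of the Cayley--Menger determinant, or giving an explicit planar perturbation argument. I would first try the cofactor/Jacobi identity: for a singular matrix, the adjugate has rank one, which yields $\partial D/\partial x_i = \pm 2\,u_i u_j$ in terms of a kernel vector $u$. Then I would check that the relevant kernel entries are nonzero, using the fact that no three points are collinear.
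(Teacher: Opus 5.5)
Your proof is correct and has the same skeleton as the paper's. Closedness of $\mathcal{T}$ makes non-tetrahedral sextuples exterior. Continuity of the $g_{ijk}$ and of $V$ makes shapely sextuples interior. What remains is to show that no degenerated sextuple is interior, and this is where the two arguments part ways.

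The paper settles this step by saying that, by continuity of $D$, a degenerated $T^*$ is not interior. But $V(T^*)=0$ plus continuity does not exclude $V\geq 0$ on a whole neighbourhood of $T^*$, since a zero of a continuous function need not be a sign change. As written, the paper therefore only proves $\partial\mathcal{T}\subseteq\mathcal{Z}$.

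Your case split is what proves $\mathcal{Z}\subseteq\partial\mathcal{T}$. In the non-strictly-facial case you leave the facial set by enlarging $x_k$. In the strictly facial case you show that $V$ vanishes transversally, via the Jacobi formula $\partial(288V)/\partial x_e=2C_{ab}$ for the edge $e$ joining vertices $a,b$. So your route gives an actual proof of the reverse inclusion, at the cost of a cofactor computation the paper never performs.

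Three small repairs:
\begin{itemize}
\item You do not need the kernel vector or the planar realization. For singular symmetric $M$ the adjugate has rank at most one, so $C_{ab}^2=C_{aa}C_{bb}$. Here $C_{aa}$ and $C_{bb}$ are the Cayley--Menger determinants of the faces opposite $a$ and $b$, namely $-16A_a^2$ and $-16A_b^2$, which are nonzero by strict faciality (Heron). This also settles the corank-one condition you left implicit.
\item $C_{ab}$ is $-16$ times a product of \emph{signed} areas, so its sign depends on whether $e$ is a diagonal or a side of the flat quadrilateral. It is not a fixed negative multiple; this is harmless, since only $C_{ab}\neq 0$ is used.
\item Drop the $-c\,\delta^2$ heuristic, which contradicts the nonzero gradient. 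With the other five entries fixed, $288V$ is a concave quadratic in $x_e$, with leading coefficient $-2$ times the squared length of the opposite edge $cd$. Its roots are the two flat positions of the hinge about $cd$, and they differ by $4h_ah_b>0$, where $h_a,h_b$ are the distances of $a,b$ from line $cd$. So the root at $\textbf{x}$ is simple, and $V$ changes sign linearly there.
\end{itemize}
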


\noindent \textbf{Proof:}

From Lemma \ref{DScloset} we have that if $T^*$ is a non-tetrahedric sectuple, then $T^*$ does not belong to the interior of the boundary of $\cal{T}$. This is, $T^*$ belongs to the exterior of $\cal{T}$.

On the other hand, if $S$ is a nondegenerate tetrahedral sextuple, then $g_{ijk}(S) > 0$ for all $g_{ijk}$ related to $S$. Moreover, $D(S) >0$. We have also that $D(S)$ is continuous. Recall that in a real continuous function, if it is positive in a given point, it is also positive in a neighborhood of this point. Then $S$ cannot belong to the boundary of $\cal{T}$, but to the interior of $\cal{T}$.

Up to this point, we have that the boundary of $\cal{T}$ is formed solely by non-degenerate tetrahedral sextuples. To end up the proof we need to prove that any arbitrary degenerated tetrahedral sextuple is an accumulation point.

Let $T^*$ a degenerated tetrahedral sextuple. Then $D(T^*)=0$. By the continuity of $D$ in $\R^6$, $T^*$ does not belong to the interior of $\cal{T}$.
$T^*$ does not also belong to the complement of $\cal{T}$, then $T^*$ if an accumulation point of $\cal{T}$ and it belongs to the boundary of $\cal{T}$.
\hfill \qed

\section{Bisection of sextuples in $\R^6$}


From this point forward, we will restrict our analysis to the bisection of tetrahedral sextuples, which are the only ones that correspond to real tetrahedra. The bisection of non-tetrahedral sextuples can result in sextuples with negative elements, thus violating both the faciality condition and the requirement of a determinant greater than or equal to zero.

In order to represent the LEB of a sextuple $\textbf{x}$, let us consider the following linear mappings:
 \begin{equation}
 b_0 (\textbf{x}) = \frac{1}{4} \cdot \big( x_1,\,\,4x_2,\,\,2x_2+2x_3-x_1,\,\,2x_4+2x_5-x_1,\,\,4x_5,\,\,4x_6 \big)
 \label{formula_b0}
 \end{equation}
\begin{equation}
b_1(\textbf{x})=\frac{1}{4} \cdot \big( x_1,\,\,2x_2+2x_3-x_1,\,\,4x_3,\,\,4x_4,\,\,2x_4+2x_5-x_1,\,\,4x_6 \big)
 \label{formula_b1}
\end{equation}

Note that if $\textbf{x}$ is a tetrahedral sextuple, then $b_0(\textbf{x})$ and $b_1(\textbf{x})$ represents  
the two tetrahedral sextuples generated by bisecting the tetrahedron $\textbf{T}(\textbf{x})$ by the edge related to $x_1$, which implies that  
$b_0(\textbf{x})$ y $b_1(\textbf{x})$ are also tetrahedral sextuples. The primary benefit of using the squares of the edge lengths as the components of the sextuple is that the LEB equations become linear, significantly simplifying their computational calculation in terms of both efficiency and speed.

We denote by $\cal{P}$ the group of $6 \times 4 = 24$ valid permutations which transforms a tetrahedral sextuple into another tetrahedral sextuple, both related to the same tetrahedron. There are 24 possible permutations because there are 6 alternatives to choose the first edge, and from there, there are 4 alternatives to choose the second edge from among its 4 neighbors. The rest of the edges are determined because the third edge forms a face with the first two. The remaining three edges are the ones opposite to the first three in the order that was previously established. This implies, therefore, that a real tetrahedron has $24$ different sextuples in $\R^6$ that represent it.

We denote a permutation by a sequence of six numbers from $1$ to $6$ separated by a whitespace. For example, the permutation $(a_1\,\,a_2\,\,a_3\,\,a_4\,\,a_5\,\,a_6)$ represents that the element in the sextuple at position $a_1$ moves to the first position, the one at position $a_2$ to the second position, and so on.


Let $\Lambda$ be the set of multi-indices of finite length, whose elements belong to $\{0,\,1\}$. If $\alpha\in\Lambda$ and $i\in\{0,\,1\}$ we denote by $\overline{\alpha i}$ the resulting multi-index where the digit $i$ is added at the right of $\alpha$. 


Let us suppose that we take a function $f:\Lambda\rightarrow\cal{P}$. This way, in the case that $\textbf{x}$ is a tetrahedral sextuple, we call $P_\alpha=f(\alpha)$ to a valid permutation which puts in the first place the highest component of $\textbf{x}(\alpha)$.


Let \(\textbf{x}\) be an initial sextuple. If its first component is not the highest value, we choose a prior initial permutation to satisfy this requirement. Let us consider the following iterated process to be applied to \(\textbf{x}\): firsty, apply the LEB to obtain new sextuples $b_0(\textbf{x})$ and $b_1(\textbf{x})$. Secondly, apply a valid permutation to each sextuple to move its highest component to the first position. In this way, we obtain the sextuples $\textbf{x}(0)=P_0(b_0(\textbf{x}))$ and $\textbf{x}(1)=P_1(b_1(\textbf{x}))$. These two steps are repeated for every new sextuple doing $\textbf{x}(\overline{\alpha i})=(P_\alpha(b_i(\textbf{x}(\alpha)))$ for each $\alpha\in \Lambda$, $i\in\{0,\,1\}$, where $P_\alpha=f(\alpha)$.


This motivates a terminology for the function $f$. We will say that $f$ is a \emph{bisection pattern}, and then $f$ determines the component that will be bisected at each step. In the case that $\textbf{x}$ is a tetrahedral sextuple, and $P_\alpha$ is a valid permutation which puts in the first place the highest component of $\textbf{x}(\alpha)$, we have that the tetrahedron $\textbf{T}(\textbf{x}(\alpha))$ belongs to the orbit of the tetrahedron $\textbf{T}(\textbf{x})$ in the iterated LEB. Recall here that the orbit of a  tetrahedra is the set of tetrahedra generated during the iterative LEB.

For example, let $\textbf{x}=(15,13,12,11,10,14)$ be a tetrahedral sextuple, with its highest component already in the first position. To generate the first iteration of the LEB, we apply equations \ref{formula_b0} and \ref{formula_b1} to obtain sextuples $b_0(\textbf{x})=\frac{1}{4}(15,52,35,27,40,56)$ and $b_1(\textbf{x})=\frac{1}{4}(15,35,48,44,27,56)$ respectively. In order to move its highest component to the first position in both sextuples, we apply permutations $P_0=(6\,5\,2\,3\,4\,1)$ and $P_1=(6\,4\,3\,2\,5\,1)$, obtaining the following sextuples in this first iteration:
\begin{equation*}
\textbf{x}(0)=P_0(b_0(\textbf{x}))=\frac{1}{4}(56,40,52,35,27,15)
\end{equation*}
\begin{equation*}
\textbf{x}(1)=P_1(b_1(\textbf{x}))=\frac{1}{4}(56,44,48,35,27,15)
\end{equation*}

To generate the second iteration of the LEB, we can proceed in the same way, using the bisection equations with the sextuples \( \textbf{x}(0) \) and \( \textbf{x}(1) \), and obtaining valid permutations $P_{00}=(2\,3\,1\,5\,6\,4), P_{01}=(3\,2\,1\,4\,6\,5), P_{10}=(2\,3\,1\,5\,6\,4), P_{11}=(3\,2\,1\,4\,6\,5),$ for the four newly generated sextuples. Figure \ref{fig:FixedPattern} shows the first two levels of LEB applied to $\textbf{x}$. Continue this process and generate other levels in the tree using new valid permutations applied to the subdivided children.

\begin{figure}[!h]
\centering
\leavevmode 
\includegraphics[scale=.43]{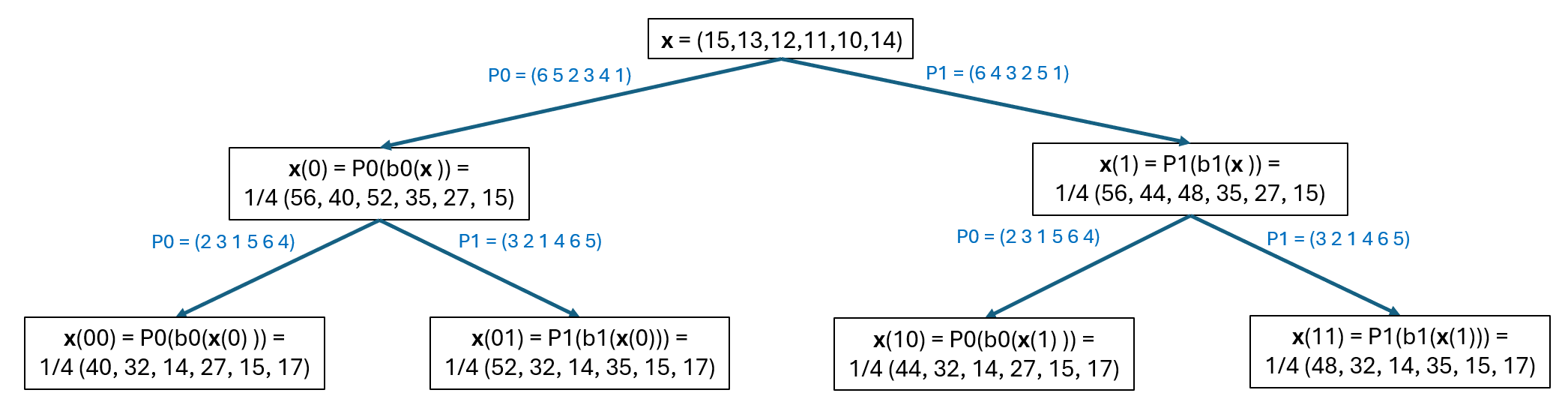}
\caption{Two levels of the LEB of $\textbf{x}=(15,13,12,11,10,14)$}
\label{fig:FixedPattern}
\end{figure}

To obtain the complete orbit we must take into account the case when the function $f$ is multivalued, i.e. $\textbf{x}(\alpha)$ has more than one maximun component. 



When the orbit of the tetrahedral sextuple $\textbf{x}$ in the iterated LEB generates only sextuples with an unique maximun component, we will say that $\textbf{x}$ is shown as \emph{uniform}. Otherwise we will say that  
$\textbf{x}$ shows a \emph{multiform} LEB pattern.
A multiform LEB pattern can be seen as a set of branches which are uniform LEB patterns, these are selections 
of the multivalued function $f$.

For example, let $\textbf{x} = (x_1, x_2, x_3, x_4, x_5, x_6) = (10, 16, 12, 16, 8, 6)$ be a tetrahedral sextuple produced during the LEB, in which the maximum value $16$ occurs twice, at $x_2$ and $x_4$. If we consider the permutation $P=(2\,3\,1\,5\,6\,4)$, that moves $x_2$ to the first position, we obtain $P(\textbf{x})=(16,12,10,8,6,16)$. And the two sextuples generated by the equations of LEB are:
\begin{equation*}
\textbf{x}(0)=b_0(P(\textbf{x}))=(4,12,7,3,6,16)
\end{equation*}
\begin{equation*}
\textbf{x}(1)=b_1(P(\textbf{x}))=(4,7,10,8,3,16)
\end{equation*}

This bisection corresponds to a first branch of the LEB pattern. But a second branch exists for $\textbf{x}$, if we use a different permutation $P=(4\,1\,5\,6\,3\,2)$, that moves $x_4$ to the first position. In this case, we obtain $P(\textbf{x})=(16,10,8,6,12,16)$, and the two new sextuples generated by the equations of LEB are: 
\begin{equation*}
\textbf{x}(0)=b_0(P(\textbf{x}))=(4,10,5,5,12,16)
\end{equation*}
\begin{equation*}
\textbf{x}(1)=b_1(P(\textbf{x}))=(4,5,8,6,5,16)
\end{equation*}

As can be seen, not only the sextuples but the tetrahedra generated in this second branch are completely different from those obtained in the first branch.



Another example of multiform LEB is the case of a regular tetrahedron, represented for example by the sextuple $\textbf{x}=(4,4,4,4,4,4)$. Note that initially, the choice of which component to subdivide is irrelevant, as the sextuples generated at the first level of the LEB are $(1, 4, 3, 3, 4, 4)$ and $(1, 3, 4, 4, 3, 4)$. However, at the second level of the LEB, three distinct branches emerge depending on which of the three longest edges is selected for the subdivision of each of the two sextuples mentioned above.
This is not the case of a 2D regular triangle, because it is indifferent which longest edge is chosen, as they result in equivalent results.


\section{Sharing a common LEB pattern}

We will say that two sextuples follow a \emph{LEB pattern} if all the sextuples generated in each of their respective orbits, corresponding to the same multi-index, have their maximum element in the same position. Moreover, in cases where they have more than one maximum element, these maxima must also appear in exactly the same positions.

Let $\mathbf{x}$ and $\mathbf{y}$ be two sextuples. They are said to follow the same pattern $f$ in the LEB if, when applying this pattern to both $\textbf{x}$ and $\textbf{y}$, the resulting $\mathbf{x_{\alpha}}$ and $\mathbf{y_{\alpha}}$ have their maximum component in the same position.  This is equivalent to stating that the sequence of permutations applied to each new sextuple generated in the orbit of the two initial sextuples is exactly the same. In order for two sextuples to follow the same pattern in the LEB, it is not sufficient that, at each step, the longest edge of both ends up in the same position. 

If the longest edge remains in the same position but the same permutations are not applied to both sextuples, then it is no longer the same pattern. However, as long as the longest edge remains in the same position, there exists a permutation that one can apply to both sextuples, thus constructing a common pattern for both.

For example, the sextuple $\textbf{y}=(19,17,14,15,16,18)$ follows the same \emph{LEB pattern} as the sextuple $\textbf{x}=(15,13,12,11,10,14)$ used in the previous subsection. By comparing Figures \ref{fig:FixedPattern} and \ref{fig:FixedPattern2}, we can see that the permutations used to place the highest component in the first position in each new sextuple generated during the iterative LEB process match exactly.

\begin{figure}[!h]
\centering
\leavevmode 
\includegraphics[scale=.43]{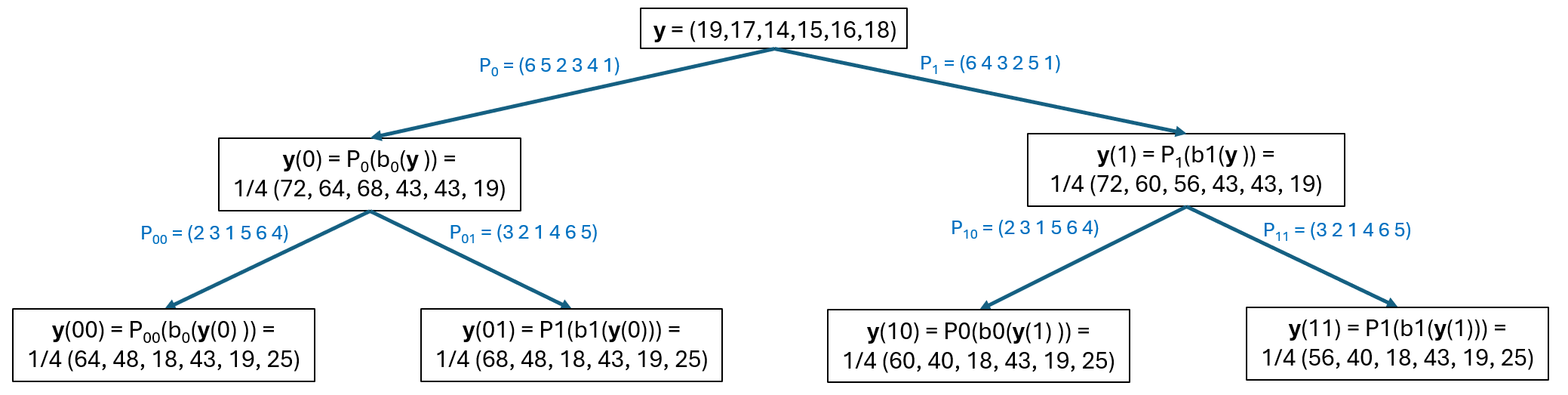}
\caption{Two levels of the LEB of $\textbf{y}=(19,17,14,15,16,18)$}
\label{fig:FixedPattern2}
\end{figure}

When referring to two tetrahedra instead of two sextuples, we will similarly say that both tetrahedra follow a \emph{LEB pattern} if there exist two sextuples, one associated with each tetrahedron, that follow the same pattern. This is analogous to stating that the sequence of permutations and subdivisions driven by the iterative LEB process in both tetrahedra is exactly the same.

We now intend to prove that the set of all sextuples following the same \emph{LEB pattern} is convex in $\mathbb{R}^6$. We recall here that a convex set in $\mathbb{R}^n$ is a set of points such that, for any pair of points within the set, the line segment connecting those two points is contained entirely within the set. 

As an example, let $\textbf{z}=(17,15,13,13,13,16)$ be a point in $\R^6$ lying on the line segment between the sextuples $\textbf{x}$ and $\textbf{y}$ mentioned earlier, which followed the same \emph{LEB pattern}. Specifically, 
it is the middle point between \textbf{x} and \textbf{y}.
Therefore, it can be shown that 
$\textbf{z}$ also follows the same \emph{LEB pattern}. In Figure \ref{fig:FixedPattern3} we can observe that the sequence of permutations in the LEB of $\textbf{z}$ is exactly the same as those used in the LEBs of $\textbf{x}$ and $\textbf{y}$.

\begin{figure}[!h]
\centering
\leavevmode 
\includegraphics[scale=.43]{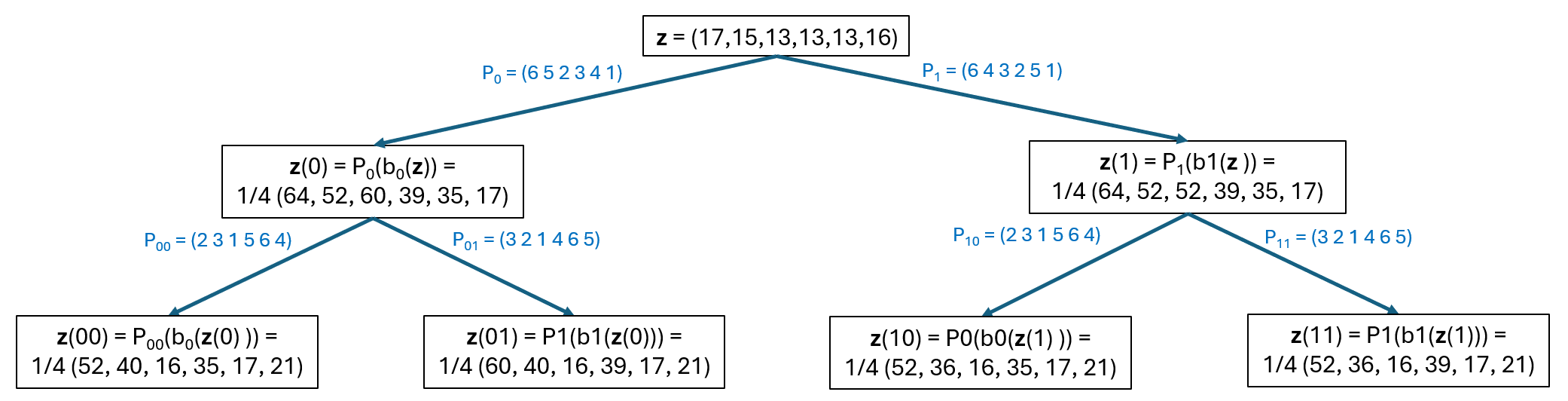}
\caption{Two levels of the LEB of $\textbf{z}=(17,15,13,13,13,16)$}
\label{fig:FixedPattern3}
\end{figure}


\begin{lema}
\label{lemmaConvex}
Let $f:\Lambda\rightarrow\cal{P}$ a given LEB pattern. 
The set $S$ of all sextuples that follow the pattern $f$ the LEB is a convex set in $\R^6$. 

\end{lema}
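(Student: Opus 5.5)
Fix the pattern $f$ and write, for each multi-index $\alpha\in\Lambda$, the map $\textbf{x}\mapsto\textbf{x}(\alpha)$ that the pattern prescribes. The plan is to show that this map is \emph{linear}, that membership in $S$ is a family of linear inequalities in $\textbf{x}$, and hence that $S$ is an intersection of closed half-spaces and so convex.

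\emph{Step 1: linearity of the orbit maps.} If $\alpha=i_1i_2\cdots i_k$, then by definition
\[
\textbf{x}(\alpha)=P_{\alpha}\,b_{i_k}\cdots P_{i_1}\,b_{i_1}P_{\emptyset}(\textbf{x}),
\]
where $P_{\emptyset}$ is the initial permutation. The permutations $P_\beta=f(\beta)$ are fixed by the pattern and do not depend on $\textbf{x}$. Each $b_i$ is linear by (\ref{formula_b0}) and (\ref{formula_b1}), and every permutation is a linear map. So $\textbf{x}\mapsto\textbf{x}(\alpha)=M_\alpha\textbf{x}$ for a fixed $6\times6$ matrix $M_\alpha$. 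This is the point where working with squared edge lengths is essential.

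\emph{Step 2: describe $S$ by linear constraints.} A sextuple $\textbf{x}$ follows $f$ exactly when, for every $\alpha$, the first component of $M_\alpha\textbf{x}$ is a maximal component. Following the definition in the text, the positions of all maximal components must also be prescribed by the pattern. So for each $\alpha$ there is a set $I_\alpha\ni 1$ of tied positions. The conditions are
\[
(M_\alpha\textbf{x})_1=(M_\alpha\textbf{x})_j \ \ (j\in I_\alpha),\qquad (M_\alpha\textbf{x})_1\ge(M_\alpha\textbf{x})_k \ \ (k\notin I_\alpha),
\]
with strict inequality in the second group where the pattern demands a unique maximum. To stay inside the region where the construction is meaningful, I would also add that $\textbf{x}$ is tetrahedral.

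\emph{Step 3: convexity.} Each equality defines a hyperplane, each weak inequality a closed half-space, and each strict one an open half-space, all of which are convex. An arbitrary intersection of convex sets is convex, so the set cut out by these conditions over all $\alpha\in\Lambda$ is convex. Concretely, for $\textbf{z}=t\textbf{x}+(1-t)\textbf{y}$ we get $M_\alpha\textbf{z}=tM_\alpha\textbf{x}+(1-t)M_\alpha\textbf{y}$. Convex combinations preserve componentwise equalities, weak inequalities, and strict inequalities (for $t\in[0,1]$ at least one term is strict). This is exactly the behaviour illustrated by $\textbf{z}=(17,15,13,13,13,16)$.

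\emph{Main obstacle.} The delicate part is the tetrahedrality requirement. The set $\cal T$ is not obviously convex in the squared-length coordinates, because the triangle inequalities involve square roots and the Cayley--Menger condition is a nonlinear determinant. My first attempt would be to argue that for $\textbf{x},\textbf{y}\in\cal T$ the convex combination $\textbf{z}$ is again a tetrahedral sextuple.
\begin{itemize}
\item \emph{Faciality.} The set of $(a,b,c)$ with $\sqrt a+\sqrt b\ge\sqrt c$ for all orderings is the cone of squared side lengths of triangles. This cone equals $\{2ab+2bc+2ca-a^2-b^2-c^2\ge0,\ a,b,c\ge0\}$, the Heron condition, which is convex. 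So faciality is preserved face by face.
\item \emph{Cayley--Menger condition.} The condition $D\ge0$ corresponds to positive semidefiniteness of the Gram matrix $G_{ij}=\tfrac12(d_{0i}+d_{0j}-d_{ij})$. This matrix is linear in the sextuple, and the positive semidefinite cone is convex. So the set of tetrahedral sextuples is itself convex.
\end{itemize}
I expect this Gram-matrix reformulation to be the key step. Once it is in place, intersecting it with the linear constraints of Step 2 completes the proof.
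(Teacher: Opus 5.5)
Your proof is correct and uses the same idea as the paper. Because $f$ fixes the permutations, each map $\mathbf{x}\mapsto\mathbf{x}(\alpha)$ is linear, so $\mathbf{z}(\alpha)=\delta\mathbf{x}(\alpha)+(1-\delta)\mathbf{y}(\alpha)$ keeps its maxima in the common positions; you simply make explicit the linearity and tie conditions that the paper's one-line computation of $\max_i\mathbf{z}(\alpha)_i$ leaves implicit.

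Your added tetrahedrality step is correct, since tetrahedral sextuples are exactly those with $G\succeq 0$, a convex condition. The paper does not impose this restriction on $S$, and note that it is faciality together with $D\ge 0$, not $D\ge 0$ alone, that is equivalent to $G\succeq 0$.
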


\noindent \textbf{Proof:}
Recall that a set $S\subset\R^6$ is convex if and only if for all pair of elements $\textbf{x}\subset\R^6$  and $\textbf{y}\subset\R^6$, we have that the line segment $L$ connecting $\textbf{x}$ and $\textbf{y}$ lies in $S$.
Furthermore we have $L=\{\textbf{z}\in\R^6:\,\textbf{z}=\delta\textbf{x}+(1-\delta)\textbf{y},\,\delta\in [0,\,1]\}$.

Suppose that $\textbf{x},\textbf{y}$ share a given pattern $f$ and let $\textbf{z}=\delta\textbf{x}+(1-\delta)\textbf{y},\,\delta\in [0,\,1]\}$.
For a given sextuple $\textbf{x}(\alpha)$ let $\textbf{x}(\alpha)_i$ be its  i-th component.
Let  $\max\limits_{i\in\{1,\cdots,6\}} (\textbf{x}(\alpha)_i)=m$ and $\max\limits_{i\in\{1,\cdots,6\}} (\textbf{y}(\alpha)_i)=l$ with $i\in\{1,\,2,\,\cdots,\,6\}$.\\ Then,

$$
\max\limits_{i\in\{1,\cdots,6\}} (\textbf{z}(\alpha)_i)=\max\limits_{i\in\{1,\cdots,6\}} (\{\delta\textbf{x}(\alpha)_i+(1-\delta)\textbf{y}(\alpha)_i\}) =\delta m+(1-\delta)l. 
$$
This completes the proof.
\hfill $\square$


\begin{lema} \label{lemmaNested}
Let $T$ be a degenerated tetrahedron with vertices $A,B,C,D$ such that $D$ is a point belonging to face $\bigtriangleup ABC$. Then the LEB of $T$ generates a sequence of nested tetrahedra such that $\{T\supset T_1 \supset T_2 \supset ... \supset T_k \}$ such that $T_k$ is generated at the step $k$, and $d(T_{3n}) \leq c \frac{d(T)}{2^n}$ for $d(T)$ being the diameter of $T$ and $c$ a constant non dependent of $n$.
\end{lema}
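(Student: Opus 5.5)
We will regard the degenerate tetrahedron $T$ as the planar triangle $\bigtriangleup ABC$ with a marked interior point $D$. The plan is to read the LEB of $T$ as a longest-edge bisection of this planar configuration, and then borrow the classical diameter-halving argument for triangles.

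\textbf{Step 1: reduction to the plane.} Since $D$ lies on the face $\bigtriangleup ABC$, every tetrahedron produced by the LEB has its four vertices in the same plane. Its convex hull is a triangle, possibly with a fourth point inside or on an edge. The longest edge of the degenerate tetrahedron is always an edge of that hull triangle: the diameter of a finite point set is attained at two extreme points, and for coplanar points the longest segment joins two hull vertices. Bisecting it means joining its midpoint $M$ to the opposite edge. In the plane this cuts the hull into two pieces, each again the hull of at most four points. Hence the children satisfy $T_{k+1}\subset T_k$ as point sets. We follow one branch, namely the child containing a chosen point, or an arbitrary branch; this yields the nested chain $T\supset T_1\supset\cdots$.

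\textbf{Step 2: diameter decay over a bounded number of steps.} The key fact is the two-dimensional one. If a triangle is bisected by its longest edge, then after at most three consecutive LEB steps every descendant edge has length at most $\tfrac{\sqrt3}{2}$ times, and in particular is strictly shorter than, ... Here I would instead aim directly for a halving statement. Fix a triangle with longest edge $L$. After one bisection, the two halves of that edge have length $L/2$. The new median edge $CM$ has length at most $\tfrac{\sqrt3}{2}L$, and the remaining edges have length at most $L$. Each further bisection cuts the current longest edge, and we use the three-step count to show the diameter drops by a factor of $2$ up to a constant. Concretely, I would show that every edge of $T_{k+3}$ is either half of an edge of $T_k$ or a median of a sub-triangle. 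The medians are controlled by the parallelogram identity, which in squared lengths reads $4m^2=2a^2+2b^2-c^2$; this is exactly the linear map $b_0,b_1$ in sextuple form. Working with squared lengths, I would track the maximal component $\mu_k$ of the sextuple $\textbf{x}(\alpha)$ and show $\mu_{k+3}\le \mu_k/4$, possibly up to a fixed constant absorbed after finitely many initial steps. Taking square roots then gives $d(T_{3n})\le c\,d(T)/2^n$.

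\textbf{Step 3: handling the extra point $D$.} The fourth vertex produces edges $AD,BD,CD$, which are shorter than the diameter. We need that after the first cut through $D$ or near it, the configuration is again a triangle with a marked point on its boundary or interior, so the same bound applies. The constant $c$ absorbs the finitely many initial steps needed before the longest edge of each node is a hull edge whose bisection behaves like a triangle bisection.

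\textbf{Main obstacle.} The hard part is the three-step estimate $\mu_{k+3}\le\mu_k/4$. Iterating the linear maps $b_i$ with the maximum-selecting permutations gives many cases, since the maximum can land in different positions. I would try to bound $\mu_{k+3}$ using only two facts: $b_i$ halves the chosen edge, and it produces new entries of the form $(2x_j+2x_l-x_1)/4\le 3\mu/4$. Then I would check that within three steps every original-length edge has been either halved or discarded. If a pure factor of $4$ per three steps fails, I will settle for a factor $\rho<1$ per three steps, and then adjust the constant or the step count.
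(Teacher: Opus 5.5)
The central estimate is never proved, and the method you sketch for it cannot produce it. You leave $\mu_{k+3}\le\mu_k/4$ as the \emph{main obstacle}, and propose to get it from two facts: the bisected entry is divided by $4$, and each new entry $(2x_j+2x_l-x_1)/4$ is at most $3\mu/4$. Both facts hold for every tetrahedral sextuple, but the estimate fails for general tetrahedra. Starting from $(4,4,4,4,4,4)$ one reaches $(1,4,3,3,4,4)$. Whichever of its three longest edges is bisected next, some sextuple at level $3$ still has maximal entry $3=\frac{3}{4}\mu_0$, namely the squared length of a median of an original equilateral face. So checking that every original edge has been halved or discarded within three steps is not enough, because medians created in the first two steps can survive. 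The contraction has to come from the special geometry of $T$, not from the bisection formulas plus crude bounds. Your fallback, a factor $\rho\in(0,1)$ per three steps with an adjusted step count, would prove a different statement: no constant $c$ can compensate for an exponential rate slower than $2^{-n}$ in $d(T_{3n})$.

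The missing idea, which is the route the paper takes, is to follow the branch that keeps $D$ inside the current triangle. Suppose $T_k$ has vertices $P,Q,R,D$ with $D\in\bigtriangleup PQR$. Then $|DP|,|DQ|,|DR|$ are at most the diameter of $\bigtriangleup PQR$, so a longest edge may be taken to be a side, say $PQ$. Its midpoint $M$ splits $\bigtriangleup PQR$ into the planar LEB children $\bigtriangleup PMR$ and $\bigtriangleup MQR$. $D$ lies in one of them, which gives a $T_{k+1}$ of the same form with $D$ still a vertex. Along this chain $d(T_k)$ equals the diameter of the $k$-th triangle in the planar LEB of $\bigtriangleup ABC$, so the claim reduces to two-dimensional diameter decay. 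There your median bookkeeping does go through, provided you use the exact identity $4m^2=2a^2+2b^2-c^2$ together with which edge was longest at each step. For example, if $MR$ is the edge bisected in $\bigtriangleup PMR$, the condition $|MR|\ge|PR|$ is exactly what bounds the new median from $P$ by $|QR|/2$.

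Your Step 1 instead asserts that every descendant has a triangular hull, which is false. If $AB$ is longest and $D$ lies in the interior of $\bigtriangleup MBC$, the child $AMCD$ has the quadrilateral hull $AMDC$, and its longest edge can be the diagonal $AD$. This is why the paper treats quadrilateral-shaped degenerate tetrahedra in a separate lemma. Your Step 3 points toward the right invariant but neither states it precisely nor proves it.
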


\noindent \textbf{Proof:}

At the first step of LEB, vertex $D$ belongs to any of the two new tetrahedra generated at this step. In Figure \ref{fig:tet}, $D \in \bigtriangleup ABC$ 

\begin{figure}[h]
\centering
\leavevmode 
\includegraphics[scale=.9]{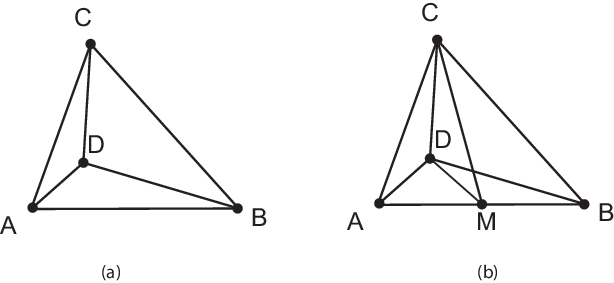}
\caption{(a) Degenerated tetrahedron $T$ with $D \in \bigtriangleup ABC$. (b) LEB of $T$. }
\label{fig:tet}
\end{figure}

Then $D$ belongs to any of two triangles generated by the LEB of $\bigtriangleup ABC$.

Let $T_1$ be a tetrahedron with vertices $A,B,M,C$. Repeat this process by taking $T_1$ instead of $T$ to obtain the sequence of new tetrahedra generated. It is clear that the diameter of $T_n$ is the same diameter of face of $T_n$ belonging to the set of triangles generated in the iterated LEB of $\bigtriangleup ABC$. Then it holds \[ d(T_{3n})  \leq c \frac{d(T)}{2^n} \]

\hfill $\square$

\begin{lema} Let $F$ be a non degenerated tetrahedron and $T$ a degenerated tetrahedron with vertices $A,B,C,D$ such that $D \in  \bigtriangleup ABC$. Then there not exists a common LEB pattern for $F$ and $T$.
\end{lema}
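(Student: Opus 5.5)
The plan is to argue by contradiction, using the fact that a common pattern makes the two refinement trees structurally identical. Suppose $F$ and $T$ share a common LEB pattern $f$. Then there are sextuples $\textbf{x}$ for $F$ and $\textbf{y}$ for $T$ such that, for every multi-index $\alpha\in\Lambda$, the same permutation $P_\alpha$ is applied to $\textbf{x}(\alpha)$ and $\textbf{y}(\alpha)$. Since $b_0,b_1$ and every $P_\alpha$ are linear, there is a fixed linear map $L_\alpha$, a composition of the $b_i$ and the permutations along $\alpha$, with $\textbf{x}(\alpha)=L_\alpha\textbf{x}$ and $\textbf{y}(\alpha)=L_\alpha\textbf{y}$. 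In this way the whole orbit is encoded by the same linear maps applied to two different starting points.

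Next I compare how the two orbits shrink. For $T$, I follow the branch of nested tetrahedra containing $D$ given by Lemma \ref{lemmaNested}. Along the corresponding multi-indices $\alpha_k$ we get $d(T_{3n})\le c\,d(T)/2^n$, so the largest component of $\textbf{y}(\alpha_{3n})$ is at most $c^2 d(T)^2/4^n$. The same multi-indices applied to $\textbf{x}$ give tetrahedra of the LEB tree of $F$. These are non-degenerate, and their shape is controlled by the known non-degeneracy of iterated LEB for tetrahedra: the minimum dihedral angle stays bounded below, or equivalently the finite number of similarity classes in the families treated in the paper. Their volumes are exactly $V(F)/2^{3n}$, since each bisection halves the volume. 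So the normalized volume of $\textbf{x}(\alpha_{3n})$, namely $V/d^3$, stays bounded below by a positive constant.

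The contradiction then comes from applying the same linear maps to the degenerate starting point. The Cayley--Menger generalized volume satisfies $V(b_i(\textbf{w}))^2=V(\textbf{w})^2/4$ for every sextuple $\textbf{w}$; this is a polynomial identity that holds on the Zariski-dense set of tetrahedral sextuples, hence everywhere. Consequently $V(\textbf{y}(\alpha))=0$ for all $\alpha$, so every tetrahedron in the orbit of $T$ is degenerate. To reach the contradiction I need a \emph{uniform-in-$\alpha$} separation. I will use the convexity in Lemma \ref{lemmaConvex}: the segment from $\textbf{x}$ to $\textbf{y}$ lies in the pattern set, so every $\textbf{z}_\delta=\delta\textbf{x}+(1-\delta)\textbf{y}$ has orbit $L_\alpha\textbf{z}_\delta$. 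The normalized volume of $L_\alpha\textbf{z}_\delta$ is a continuous function of $\delta$. It is bounded below uniformly for $\delta>0$, by the LEB non-degeneracy theorem applied to the non-degenerate tetrahedra $\textbf{z}_\delta$, but it equals $0$ at $\delta=0$. I then show that this mismatch forces the maximum position in $L_\alpha\textbf{z}_\delta$ to differ from that in $L_\alpha\textbf{y}$ for some $\alpha$ when $\delta$ is small. The reason is that the degenerate branch of Lemma \ref{lemmaNested} follows the triangle LEB of $\bigtriangleup ABC$. In that branch the longest edge is eventually always a planar edge, and an edge involving $D$ must at some step tie with or exceed it. A tie is recorded in the pattern as a multi-maximum position, which the non-degenerate orbit does not reproduce.

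The main obstacle is the final step: turning ``degenerate versus non-degenerate'' into a genuine disagreement of maximum positions at some finite level. The preceding steps are routine linear algebra. What I will try first is an explicit check with small cases. I will use Lemma \ref{lemmaNested} to locate a level where the sextuple in the orbit of $T$ has two equal maximal components. This is expected because $D$ lies in the plane, so a bisection through $D$ reproduces the triangle refinement, and the triangle LEB of $\bigtriangleup ABC$ generates sibling triangles with repeated longest edges. I will then argue that a strictly facial, positive-volume sextuple with the same linear history cannot have that tie: the equality is a linear condition $\ell_\alpha(\textbf{x})=0$, and positivity of the volume pushes $\textbf{x}$ strictly off that hyperplane.
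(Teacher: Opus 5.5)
Your argument does not close, and both ingredients meant to produce the contradiction fail.

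The shape control you invoke for $F$ is not an available theorem. That is, a positive lower bound on the dihedral angles of all LEB descendants of an arbitrary non-degenerate tetrahedron is not known: regularity of 3D LEB for general tetrahedra is an open problem, which is why finiteness results such as those for $R_1^+$ are confined to special families. Worse, the uniformity you then require is false outright. For every fixed $\alpha$ the normalized volume of $L_\alpha\textbf{z}_\delta$ is continuous in $\delta$ and vanishes at $\delta=0$; already for the empty multi-index it is the normalized volume of $\textbf{z}_\delta$ itself. So no bound uniform in $\delta>0$ exists, and the ``mismatch'' you want to exploit is an artifact of that premise.

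The tie mechanism of your last step also breaks down. Along the branch of Lemma \ref{lemmaNested}, $D$ lies in the base triangle $t$ of every $T_k$. A point of $t$ that is not a vertex is strictly closer to each vertex of $t$ than the longest side of $t$. So, apart from the special case where a bisection point lands exactly on $D$, an edge through $D$ never ties with, let alone exceeds, the longest edge; ties among base edges occur only for special $\bigtriangleup ABC$. Even granting a tie, $\ell_\alpha(\textbf{w})=0$ is a hyperplane that meets the shapely region (the regular sextuple $(4,4,4,4,4,4)$ has all components tied). So positive volume does not push $\textbf{x}$ off it.

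The missing idea is to use what a common pattern actually transports: \emph{which labelled edge is bisected} at each node, rather than shapes or ties. Your own observation that in the degenerate branch the longest edge is a planar edge (always, not merely eventually) is exactly what is needed. It means that along the multi-indices $\alpha_k$ of Lemma \ref{lemmaNested} no edge incident to $D$ is ever bisected. A common pattern applies the same permutations and hence bisects the same labelled edges. Therefore, along the same $\alpha_k$, no edge of $F$ incident to the vertex $Q$ corresponding to $D$ is ever bisected. Every $F_{\alpha_k}$ is then $Q$ joined to a triangle lying in the plane of the face of $F$ opposite $Q$. All its edges at $Q$ have length at least the height $h>0$ of $F$ from $Q$, so $d(F_{\alpha_k})\ge h$ for all $k$. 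This contradicts the fact that LEB diameters tend to zero. This is the paper's proof, and it needs neither convexity, nor the Cayley--Menger identity, nor any angle bound.
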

\noindent \textbf{Proof:}

Let us consider the sequence of nested tetrahedra like in Lemma \ref{lemmaNested} $\{T\supset T_1 \supset T_2 \supset ... \supset T_n \}$. Note that in the LEB of $T_n$ that obtain $T_{n+1}$, it is never bisected the edges connecting to vertex $D$. This is, to obtain $T_{n+1}$ it is not subdivided any edge containing $D$. In the LEB of a non degenerated tetrahedron this never happens.

Indeed, let $F=MNPQ$ where $M$, $N$, $P$ and $Q$ are the vertices of the non degenerated tetraedron $F$.
Let $F \supset F_1 \supset F_2 \supset ... \supset F_n$ be a sequence of nested tetrahedra generated in the iterated LEB of $F$, such that $F_n$ is generated at step $n$.

Let us suppose that every $F_n$ has a common vertex in $Q$ and lying in the opposite plane over the face $MNP$ of $F$. Then, all the edges containing the vertex $Q$ in the $F_n$ have lenghts greater or equal to the height $F$ in relation to face $MNP$. But, this contradicts the fact that, in the LEB of the tetrahedron, the maximum edge length obtained at each step tends to zero as $n$ approaches infinity.

Then, in a non degenerated tetrahedron, it cannot exist a sequence that hold $F \supset F_1 \supset F_2 \supset ... \supset F_n$ and this implies that the edges connecting $Q$ in the $F_n$ will be subdivided. 
Finally, this also implies that $F$ and $T$ cannot have a common pattern.

\hfill $\square$

\begin{lema} Let $T$ be a degenerated tetrahedron with quadrilateral shape. Then, the iterated LEB of $T$ generates at least, a new degenerated tetrahedron $Q$ of quadrilateral shape with vertices $E,F,G,H$, where $H \in  \bigtriangleup EFG$.
\end{lema}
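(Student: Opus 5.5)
We read ``degenerated tetrahedron of quadrilateral shape'' as a tetrahedron $T=ABCD$ whose four vertices are coplanar and in convex position. No vertex then lies in the triangle formed by the other three, as in the sextuple $(8,8,7,4,1,4)$ of Figure \ref{fig:degenerated_sextuples}a. The plan is to follow one descending branch of the iterated LEB of $T$ and show it reaches a ``triangular'' configuration, in which one vertex lies in the triangle of the other three. That configuration is exactly the situation of Lemma \ref{lemmaNested}.

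\textbf{Step 1: the first bisection.} In a planar convex quadrilateral $ABCD$, the six segments between vertices are the four sides and the two diagonals $AC$, $BD$. The longest of them is always a diagonal. Indeed, if a side $AB$ were strictly longest, then in the triangles $ABC$ and $ABD$ the angles at $C$ and $D$ would be the largest ones. Since the diagonals cross, one can show the angles at $C$ and $D$ together with the remaining angles would violate the angle sum $2\pi$. More simply, one uses the classical fact that the sum of the diagonals of a convex quadrilateral exceeds the sum of either pair of opposite sides, plus a short case check.

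So let the longest edge be the diagonal $AC$, with midpoint $M$. The LEB produces the two children $ABMD$ and $CBMD$. Both are coplanar, hence degenerated. The diagonal $BD$ crosses $AC$ at an interior point $O$. If $O=M$, then $M$ lies on segment $BD$, and each child has a vertex on an edge. This is a (limit) case of $H\in\triangle EFG$, so we are done. Otherwise, say $O$ lies strictly inside segment $MC$. Then in the child $ABMD$ the point $M$ lies strictly inside triangle $ABD$. This holds because $M$ is on segment $AO\subset AC$, and $O$ lies in the relative interior of $BD$.

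\textbf{Step 2: conclusion.} The child $Q=ABMD$ has the form $E,F,G,H$ with $H=M\in\triangle EFG=\triangle ABD$. Its four points are coplanar, so $Q$ is a new degenerated tetrahedron of the required type, and it is generated already at the first step. When $O$ lies in $AM$, the symmetric argument applies to the child $CBMD$. Combined with Lemma \ref{lemmaNested}, this also shows that the whole branch below $Q$ nests down.

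\textbf{Main obstacle.} The delicate point is Step 1, showing rigorously that the longest of the six segments is a diagonal. Ties also need handling: a side could equal a diagonal only in a degenerate convex position, and the multiform choice must be treated there. I would first try the inequality $|AC|+|BD|>|AB|+|CD|$, together with the triangle inequality in the triangles $ABO$ and $CDO$, to exclude a side being strictly longest. Failing that, I would argue directly with the sextuple formulas $b_0,b_1$ in $\R^6$ and the condition $V=0$.
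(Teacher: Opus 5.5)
Your Step 1 rests on a false claim: the longest of the six segments of a planar convex quadrilateral need not be a diagonal. Take the isosceles trapezoid $A=(0,0)$, $B=(4,0)$, $C=(3,1)$, $D=(1,1)$. Its squared lengths are $|AB|^2=16$, $|CD|^2=4$, $|BC|^2=|DA|^2=2$ and $|AC|^2=|BD|^2=10$, so the side $AB$ is the unique longest edge. The diagonal-sum inequality still holds there ($2\sqrt{10} > 6$), so no case check built on it can rule this out. The angle-sum idea fails as well, since the interior angles at $C$ and $D$ are simply obtuse. Moving $C,D$ to $(3,\sqrt{7})$, $(1,\sqrt{7})$ gives $|AB|=|AC|=|BD|=4$, so side--diagonal ties also occur in strictly convex position, not only in degenerate ones. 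Hence your argument only covers the case in which a diagonal is bisected. That case is handled correctly; locating the crossing point $O$ is more careful than the paper's bare assertion that $M$ lies in one of the faces. But your Step 2 claim that $Q$ always appears at the first bisection is false in general.

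The missing case is precisely the second case of the paper's proof. Suppose a side, say $AB$, is bisected at $M$. The children $AMCD$ and $MBCD$ are again strictly convex quadrilaterals, and each inherits one of the parent's diagonals ($AC$, respectively $BD$) as one of its own diagonals. The same is true whichever side is bisected. So follow the branch that keeps a fixed diagonal, say $AC$, for as long as only sides are bisected. Every tetrahedron on that branch contains $A$ and $C$, so its diameter is at least $|AC|$. This cannot continue indefinitely, because LEB diameters tend to zero. That fact also holds for coplanar vertices, since the standard argument only uses $|MV|^2=\tfrac12(|AV|^2+|BV|^2)-\tfrac14|AB|^2\leq\tfrac34|AB|^2$ for the midpoint $M$ of a longest edge $AB$. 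Therefore, after finitely many steps, some descendant (still a strictly convex planar quadrilateral) has a diagonal as its chosen longest edge. Applying your Step 1 argument to that descendant yields the required $Q$.
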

\noindent \textbf{Proof:}

Let us see in Figure \ref{lema_1} a degenerated tetrahedron $T$ with quadrilateral shape given by vertices $A,B,C,D$ and $Q$ the degenerated tetrahedron given by vertices $E,F,G,H$.

\begin{figure}[h]
\centering
\leavevmode 
\includegraphics[scale=.9]{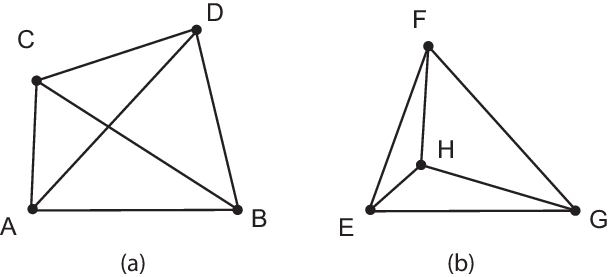}
\caption{(a) Tetrahedron $T$ with with quadrilateral shape. (b) Tetrahedron $Q$ of vertices $EFGH$ where vertex $H$ belongs to $\bigtriangleup EFG$}
\label{lema_1}
\end{figure}

Case 1: Suppose $T$ has as longest edge, one of the diagonal of the quadrilateral, Figure \ref{lema_2}. Then, the first step of the LEB generates two tetrahedra whose vertices are $B,C,D, M$ and $A,B,C,M$  see Figure \ref{lema_2} where $M$ is the midpoint of the longest edge $AD$ and $M$ belongs to one of the faces $ABC$, $BCD$, in this case, it belongs to $\bigtriangleup BCD$.

\begin{figure}[h]
\centering
\leavevmode 
\includegraphics[scale=.9]{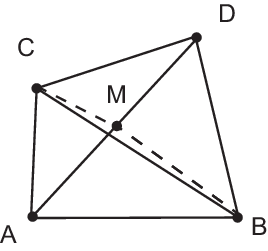}
\caption{Diagonal of quadrilateral tetrahedron $ABCD$ as its longest edge. }
\label{lema_2}
\end{figure}

 Case 2: Let $M$ the midpoint of longest edge $AB$. At the first step of LEB, it is generated two tetrahedra of quadrilateral shape $ACDM$ and $MCDB$.  It can be seen that the diagonals of $ABCD$ are just the diagonals in the quadrilaterals $ACDM$ and $MCDB$, see dashed lines in Figure \ref{lema1_3}.

\begin{figure}[h]
\centering
\leavevmode 
\includegraphics[scale=.9]{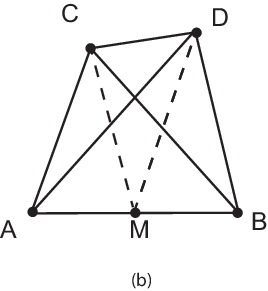}
\caption{LEB of tetrahedron $ABCD$ by the mid point $M$ }
\label{lema1_3}
\end{figure}

In the iterated LEB the diameters tend to zero, then this also hold in tetrahedra $ACDM$ and $MCDB$ instead of $ABCD$. Then, after a finite number of steps, we get a quadrilateral tetrahedron where its longest edge is just a diagonal of the generated tetrahedron. This leads us again to the initial case where the longest edge of the initial tetrahedron is one of its diagonals. This is the case where the triangular tetrahedron of the first step where found.

\hfill $\square$

\begin{teor} Let $\textbf{x}$ be a tetrahedral sextuple and $\textbf{y}$ a non-tetrahedral sextuple. Then there not exists a common LEB pattern for $\textbf{x}$ and $\textbf{y}$.
\end{teor}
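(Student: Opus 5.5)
We argue by contradiction, combining the convexity result of Lemma \ref{lemmaConvex} with the boundary characterization of Theorem \ref{teorema1}. Suppose $\textbf{x}$ is tetrahedral, $\textbf{y}$ is non-tetrahedral, and both follow a common LEB pattern $f$. By Lemma \ref{lemmaConvex}, every point $\textbf{z}_\delta=\delta\textbf{x}+(1-\delta)\textbf{y}$ with $\delta\in[0,1]$ also follows $f$.

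The first step is to locate a degenerated sextuple on this segment. The set $\cal{T}$ is closed by the lemma on facial sextuples with $D\geq 0$. Hence $\delta^*=\inf\{\delta\in[0,1]:\textbf{z}_\delta\in\cal{T}\}$ satisfies $\delta^*>0$ and $\textbf{z}_{\delta^*}\in\cal{T}$. Moreover $\textbf{z}_{\delta^*}$ is a limit of non-tetrahedral points, so it lies on the boundary of $\cal{T}$. By Theorem \ref{teorema1}, $\textbf{w}=\textbf{z}_{\delta^*}$ is therefore a degenerated sextuple sharing the pattern $f$.

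If $\textbf{x}$ itself is degenerated, we replace it by a nearby shapely point on the segment when possible. Alternatively, we compare $\textbf{w}$ directly with a shapely point on the segment. To ensure that such a point exists, we may need to assume $\textbf{x}$ is shapely, or else move slightly into the interior while keeping the pattern; this is a gap I would flag.

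The second step is to reduce $\textbf{w}$ to the configurations treated earlier. A degenerated tetrahedron has coplanar vertices, so there are two cases. Either one vertex lies in the triangle spanned by the other three (this includes the collinear cases as limits), or the four vertices are in convex position, i.e. $\textbf{T}(\textbf{w})$ has quadrilateral shape.

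In the first case, the lemma asserting that a non-degenerated tetrahedron and a degenerated tetrahedron with $D\in\bigtriangleup ABC$ admit no common LEB pattern applies directly. We take as the non-degenerated tetrahedron a shapely point $\textbf{z}_\delta$ with $\delta$ close to $1$, which still follows $f$. This contradicts the common pattern.

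In the quadrilateral case, the previous lemma shows that the iterated LEB of $\textbf{T}(\textbf{w})$ produces, at some multi-index $\alpha$, a degenerated descendant $\textbf{w}(\alpha)$ of triangular type. Since patterns are defined node by node, the subtrees rooted at $\alpha$ for $\textbf{w}$ and for $\textbf{z}_\delta$ follow a common pattern. Here $\textbf{z}_\delta(\alpha)$ is a descendant of a shapely tetrahedron and is therefore non-degenerated, because the bisection maps preserve positive volume by halving it. Applying the triangular-case lemma to this pair again yields a contradiction.

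The main obstacle is the existence of a shapely point on the segment when $\textbf{x}$ is only degenerated. If all of $[\textbf{x},\textbf{w}]$ is degenerated, the argument must instead be run with the pair $\textbf{w}$ and $\textbf{y}$. For that pair I would try to show that just outside $\cal{T}$ the bisection maps $b_0,b_1$ eventually produce a sextuple with a negative or non-facial behaviour that changes the maximum position, but this needs a separate argument.

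A secondary subtle point is the handling of the collinear degenerate cases. I would treat these as limits of the triangular case, where the nested-sequence argument of Lemma \ref{lemmaNested} still applies.
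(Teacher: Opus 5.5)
Your flagged case is a genuine gap, and neither of your patches closes it. Write $S_f$ for the class of sextuples following $f$. If $S_f$ contained a shapely sextuple $\textbf{s}$, your argument applied to $(\textbf{s},\textbf{y})$ would finish. So the open case is exactly a convex class $S_f$ that contains a degenerated $\textbf{x}$ and a non-tetrahedral $\textbf{y}$ but no shapely point, for instance when $[\textbf{x},\textbf{y}]\cap\mathcal{T}=\{\textbf{x}\}$.

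\emph{Moving slightly into the interior while keeping the pattern} presupposes precisely what is missing. Convexity gives no interior points, and the only no-common-pattern lemma in the paper needs a non-degenerated partner.

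Your heuristic that just outside $\mathcal{T}$ the maps $b_0,b_1$ eventually move the maximum is not automatic either. Let $D$ be a generic interior point of $\bigtriangleup ABC$, and obtain $\textbf{y}$ from $\textbf{x}$ by lowering the three squared edges at $D$ by a small $\varepsilon^2$. Then $\textbf{y}$ is facial with $V(\textbf{y})<0$. Yet along the nested sequence of Lemma \ref{lemmaNested} the two orbits differ only in the entries of the edges at $D$, each lowered by exactly $\varepsilon^2$. This holds because $b_0$ and $b_1$ fix the indicator vector of the edges at a vertex not on the bisected edge. Those entries are never maximal on that branch, so they never affect the choice of edge there.

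The paper's proof has the same hole. It finds a degenerated point on the segment and appeals to Lemma \ref{lemmaNested}, tacitly using $\textbf{x}$ as the non-degenerated partner. What both arguments establish is the statement for shapely $\textbf{x}$. For degenerated $\textbf{x}$ you need either a new ingredient, namely a no-common-pattern result for degenerated versus non-tetrahedral sextuples, or a stronger hypothesis.

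Outside this case, your proof is the paper's argument written out more carefully: convexity via Lemma \ref{lemmaConvex}, a degenerated point on the segment, and a contradiction with the lemma for a non-degenerated versus a triangular-type degenerated tetrahedron. For shapely $\textbf{x}$ it is complete, taking $\textbf{x}$ itself as the partner.

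Two of your steps improve on the paper. First, you locate $\textbf{w}$ as the first point of $\mathcal{T}$ on the segment, via closedness and Theorem \ref{teorema1}. This also works when $\textbf{y}$ is non-facial with $V(\textbf{y})\ge 0$, where continuity of $V$ alone yields no zero. For example, take $\textbf{x}=(1,1,1,1,1,1)$, $\textbf{y}=(-1,-1,-1,1,1,1)$ and $\textbf{z}=(s,s,s,1,1,1)$. Then $144\,V(\textbf{z})=s^2(3-s)$, which is positive at both ends and only touches $0$ at $s=0$. Second, you route quadrilateral-type $\textbf{w}$ through the quadrilateral lemma and the subtree at $\alpha$, which the paper omits.

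One correction: do not treat collinear configurations ``as limits''. Sharing a pattern is not preserved under limits, since a limit can create new ties. No limit is needed anyway: the middle one of three collinear vertices lies on a side of the triangle spanned by the other three, which is already the case $D\in\bigtriangleup ABC$.
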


\noindent \textbf{Proof:}
First, let us suppose that $\textbf{x}$ and $\textbf{y}$ share a common LEB pattern. Let us consider the segment of the line joining $\textbf{x}$ and $\textbf{y}$ in the space of sextuples. By the continuity of the \emph{generalized volume function}, we can find a point $\textbf{z}$ in that segment representing a degenerated sextuple. On the other hand, all the sextuples represented by points within this segment follow a common pattern in the iterated LEB. However, this cannot happen, by Lemma \ref{lemmaNested}. 
\hfill $\square$

\section{Directed graph formulation of the Longest Edge Bisection}

Permutations of sextuples and LEB patterns provide a convenient and rigorous framework for describing the iterative Longest Edge Bisection (LEB) in purely algebraic terms. In particular, when the sequence of permutations applied during the refinement is restricted to a finite set, the entire infinite genealogical tree generated by the LEB can be encoded as a finite directed graph. Each node of this graph represents a state, i.e., a specific pair of permutations that must be applied to the two children produced at a given bisection step, while directed edges describe the transitions between states induced by successive bisections.

In this section we show that both the $R_1^+$ family (which extends the Adler family of nearly equilateral tetrahedra) and the Liu–Joe family admit such a finite-state description. Remarkably, both families are governed by the same directed graph of eight states; the only difference lies in the initial state from which the iteration starts.  

\subsection{Finite-state graph induced by LEB patterns}

Recall that, for a tetrahedral sextuple $x$, the LEB produces two children given by the linear mappings $b_0(x)$ and $b_1(x)$ defined in Equations (4)–(5). In order to continue the iteration, each resulting sextuple must be permuted so that its largest component is placed in the first position. When this process is repeated iteratively, the sequence of permutations applied at each level defines a LEB pattern.

When the set of permutations that appear in this process is finite, the refinement can be represented by a directed graph $G = (K, Q, E, k_0)$, where:

\begin{itemize}
    
\item[-] $K$ is a finite set of states. Each state corresponds to a specific pair of permutations $(P_0, P_1)$ applied, respectively, to the children generated by $b_0$ and $b_1$.

\item[-] $Q = {0,1}$ is the set of input symbols, indicating whether the left ($0$) or right ($1$) child is selected.

\item[-] $E \subset K \times Q \times K$ is the set of directed edges, encoding the transition between states. If a path begins at $n$ and is directed to $m$, and the digit $q\in\{0,\,1\}$ is assigned to this path, this means that the permutation represented by node $m$ must be applied to all tetrahedra generated in $n$ whose end digit is $q$.

\item[-] $k_0 \in K$ is the initial state, determined by the initial sextuple.

\end{itemize}

Traversing the graph along a path labeled by a sequence of symbols in $Q$ specifies exactly which permutations must be applied at each refinement step. In this way, the infinite LEB tree is reduced to a finite automaton.

\subsection{Graph of the $R_1^+$ family}

We now consider the family $R_1^+$ introduced in \cite{AMC_2024}, which extends the classical Adler family of nearly equilateral tetrahedra. Adler in 1983, \cite{Adl} pointed out that if a tetrahedron is nearly equilateral (edge lengths within $5\%$ of each other) and the first and second longest edges are opposite, then the iterative LEB  produces $\leq 37$  similarity classes. Recently, Trujillo \textit{et. al}, \cite{AMC_2024} proved the conjecture given by Adler and improved the bound of $5\%$ to $22.47\%$, and this yields the $R_1^+$ fammily.

For tetrahedra belonging to $R_1^+$, the sequence of permutations required to place the longest edge first at each subdivision step coincides with the patterns illustrated in Figures \ref{fig:FixedPattern},\ref{fig:FixedPattern2} and \ref{fig:FixedPattern3}. In particular, the same finite set of permutation pairs reappears throughout the refinement.

For example, let $\textbf{x} = (15, 13, 12, 11, 10, 14)$ be the same tetrahedral sextuple already used in Section~3. The first two levels of the LEB applied to this sextuple were shown in \ref{fig:FixedPattern}. This sextuple belongs to the $R_1^+$ family, and its largest component is already in the first position. We can therefore group each sextuple in the refinement tree according to the permutations applied to its two children, assigning each of them to a state of the graph.

Figure \ref{fig:arbol_grafo_Adler} extends the LEB tree of the previous figure by showing additional refinement levels for several sextuples, and colors each node according to the permutations that must be applied to its children and further descendants. Each color thus represents a different state in the graph, which encodes the pair of permutations to be applied to the two children of the corresponding sextuple.

\begin{figure}[!h]
\centering
\leavevmode 
\includegraphics[width=\textwidth]{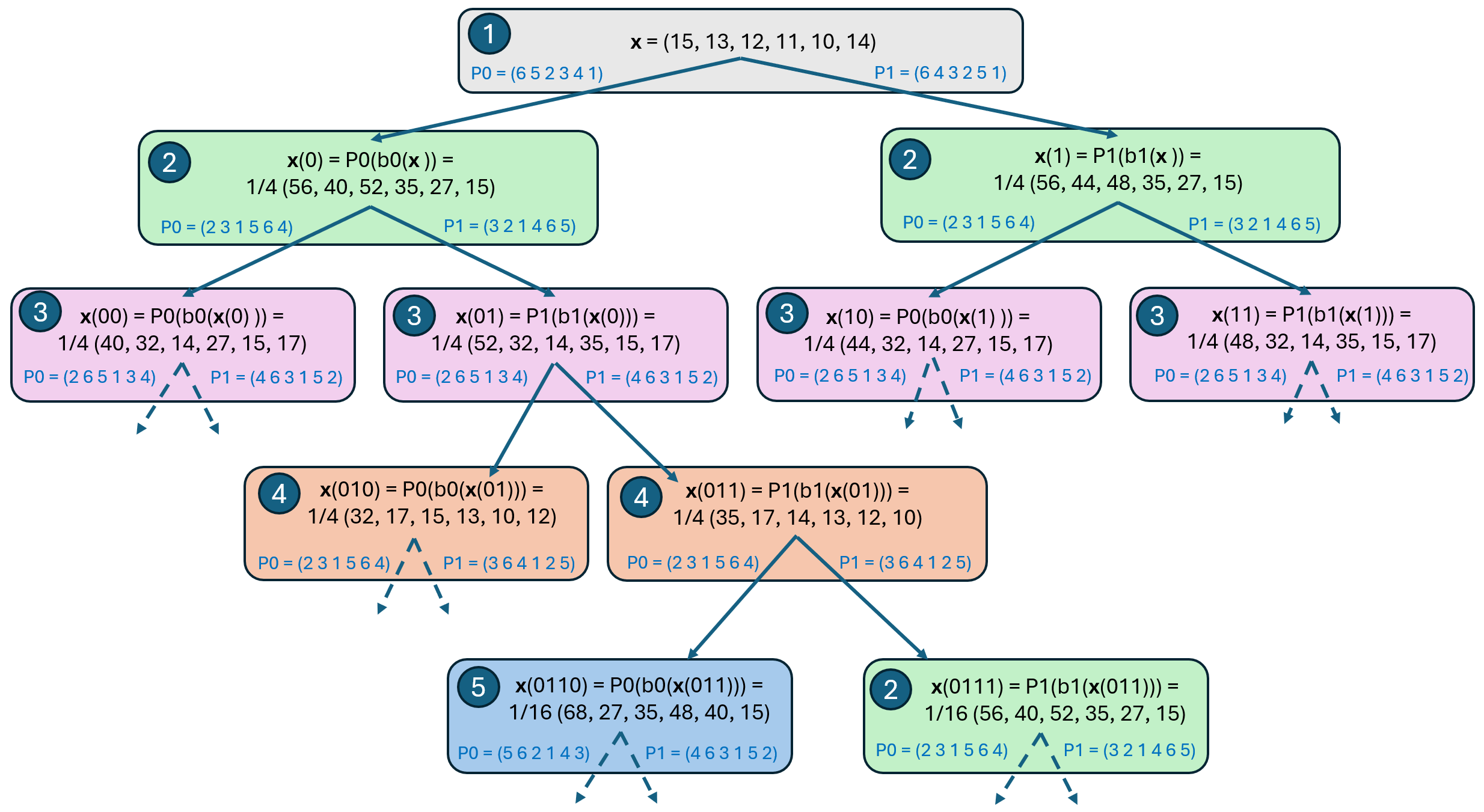}
\caption{LEB for the Adler tetrahedron $\textbf{x}=(15,13,12,11,10,14)$. States are circled by an integer and colors are used to differentiate the states.}
\label{fig:arbol_grafo_Adler}
\end{figure}

This observation leads to the following result.

\begin{lema}
The LEB of any tetrahedron in the $R_1^+$ family can be represented by a directed graph with eight states, as shown in Figure \ref{fig:grafo}. The initial state is state $\circled{1}$.
\end{lema}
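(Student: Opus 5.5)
The plan is to turn the statement into a finite, checkable closure property of a family of polyhedral cones in $\R^6$. Each state of the graph in Figure \ref{fig:grafo} is a pair of permutations $(P_0,P_1)$. To each state $k$ I attach the set $C_k$ of sextuples (largest component first) that the state governs, defined by the inequalities saying that the largest component of $b_0(\textbf{x})$ sits where $P_0$ moves it to the first position, and likewise for $b_1(\textbf{x})$ and $P_1$. Because $b_0,b_1$ and all valid permutations are linear, these are finitely many linear inequalities, so each $C_k$ is a convex polyhedral cone. This is the same linearity used in Lemma \ref{lemmaConvex}. The lemma then reduces to two facts:
\begin{itemize}
\item[(i)] every sextuple of $R_1^+$, normalized with its largest component first, lies in $C_{1}$;
\item[(ii)] for each edge $(n,q,m)$ of the graph and every $\textbf{x}\in C_n$ arising in the refinement, $P^{(n)}_q(b_q(\textbf{x}))\in C_m$.
\end{itemize}
Once (i) and (ii) hold, induction on the length of the multi-index $\alpha$ shows that the pattern $f(\alpha)$ is read off by walking the graph from state $\circled{1}$ along the digits of $\alpha$.

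For (i), I would use the description of $R_1^+$ from \cite{AMC_2024}. Up to a valid permutation, its sextuples have the first and second longest edges opposite, and all edge lengths lie within the $22.47\%$ ratio bound. Substituting into $b_0,b_1$, the first-level children are explicit. One then checks that their largest component is the one fixed by the permutations $P_0=(6\,5\,2\,3\,4\,1)$ and $P_1=(6\,4\,3\,2\,5\,1)$ used in Figure \ref{fig:FixedPattern}. This comes down to inequalities such as $4x_6>2x_2+2x_3-x_1$ and $4x_6>4x_2$, which hold on $R_1^+$ because $x_6$ is the second largest component and the lengths are nearly equal.

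For (ii), I would not work with the whole cone $C_n$, which may be too large. Instead I would compute, for each state, the image region actually reached. Starting from the region $R_1^+$ in state $\circled{1}$, I apply the linear map $\textbf{x}\mapsto P_q(b_q(\textbf{x}))$ and obtain its image, an explicit region cut out by linear and ratio inequalities. I then verify that this image lies in the region assigned to the target state. Iterating over the eight states, the set of regions must close up: every image has to fall inside a region already listed. Since deeper LEB levels make the tetrahedra at a given state similar to earlier ones, as in the similarity-class results of \cite{AMC_2024}, the sextuples at each state should, after scaling by the appropriate power of $4$, be linear images of the original $\textbf{x}$ drawn from a finite list. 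So each membership check becomes a finite list of linear inequalities in $x_1,\dots,x_6$, to be verified on $R_1^+$.

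I expect the main obstacle to be this closure step. The difficulty is guaranteeing that, at every state, the maximum stays unique, or that ties resolve consistently with the multiform convention. It is also not yet clear that the regions do not have to grow with depth. My first attempt would be to track the finitely many similarity classes of \cite{AMC_2024} (at most $37$) and show that they group into exactly eight permutation-pair states. The strict inequalities would be obtained from the $22.47\%$ bound; the boundary cases, where equalities appear, would be treated as multiform branches that still follow the same edges of the graph.
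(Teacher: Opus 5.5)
Your route is different from the paper's, and it is the more rigorous of the two. The paper gives no argument beyond an observation. It expands the LEB tree of the single representative $\textbf{x}=(15,13,12,11,10,14)$ for a few levels and colours each node by the permutation pair applied to its children (Figure \ref{fig:arbol_grafo_Adler}). It reads the eight recurring pairs off as the graph of Figure \ref{fig:grafo}. It then extends this to all of $R_1^+$ by asserting that every member follows the same permutation sequence as the representative. That is quick, but it extrapolates from finitely many levels. It also never shows that the transitions are deterministic at every depth, i.e.\ that every node carrying a given pair has children carrying the pairs the graph prescribes.

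Your invariant-cone formulation supplies exactly this, as a finite certificate: a list of matrices $A$, up to positive scaling and with $\textbf{x}(\alpha)=A\textbf{x}$, each labelled by a state. The list must contain the identity with label \circled{1}, be closed under $A\mapsto P^{(n)}_q b_q A$ with labels moving along the edges $(n,q,m)$, and satisfy $A(R_1^+)\subset C_{\mathrm{state}(A)}$. Induction on $|\alpha|$ then proves the lemma at all depths, and each check is a matrix identity or a linear inequality on $R_1^+$. The similarity-class theorem of \cite{AMC_2024} is needed only to know in advance that the list is finite, so nothing is circular.

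Two points to tighten. First, a state is a function of the ordered sextuple, not of its similarity class. So track the ordered linear forms rather than grouping the $37$ classes into eight states. For the same reason, fix which of the four max-first valid permutations normalizes the initial sextuple, since from the second level on the states depend on the positions of $x_2,\dots,x_5$. Second, at sextuples with tied maxima, the closed cones only show that the graph describes \emph{one} admissible branch of the multiform LEB, not that every branch follows it. For example, if the regular tetrahedron is admitted, its first-level children have three tied maxima. Only the choice of position $6$ gives the transition out of state \circled{1}, and the other two branches leave the graph. So on the boundary the lemma holds only in this selective sense.
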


\begin{figure}[ht]
\centering
\leavevmode 
\includegraphics[scale=.7]{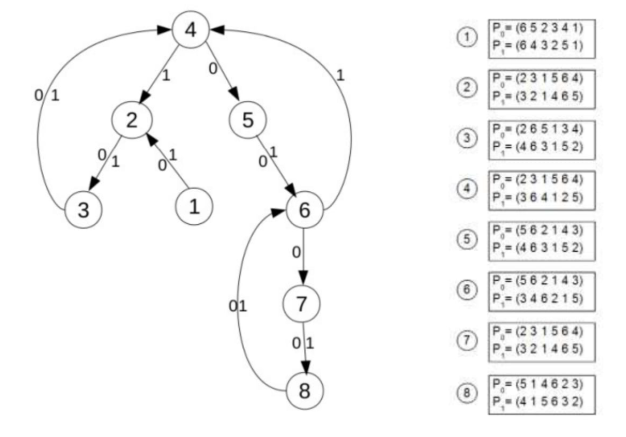}
\caption{LEB pattern of the family $R_1^+$}
\label{fig:grafo}
\end{figure}

The nodes of the graph represent the possible permutation pairs $(P_0, P_1)$ applied to the children produced by $b_0$ and $b_1$, while edges labeled by $0$ or $1$ indicate the corresponding transitions. In this way, all the infinitely many levels of the LEB tree of a representative $R_1^+$ tetrahedron (Figure \ref{fig:arbol_grafo_Adler}) collapse into this finite-state graph with eight states.

\begin{remark}
The directed graph representation provides a compact and efficient description of the refinement process. Once the graph is known, arbitrarily deep LEB refinements can be performed without recomputing permutation choices, which is particularly advantageous for large-scale or highly iterative computations.
\end{remark}

\subsection{Convexity of the $R_1^+$ family in $\mathbb{R}^6$}

An important structural property of LEB patterns, established in Lemma 4, is that the set of sextuples sharing the same LEB pattern forms a convex subset of $\mathbb{R}^6$.

Applying this result to the $R_1^+$ family yields the following interpretation. Consider sextuples representing tetrahedra in $R_1^+$ whose largest component appears in a fixed position of the sextuple. All such sextuples follow the same LEB pattern and therefore form a convex region in $\mathbb{R}^6$. Since the largest component may appear in any of the six positions, the full $R_1^+$ family can be expressed as the union of six convex regions, each corresponding to one possible position of the longest edge.

This convex decomposition provides a geometric description of the $R_1^+$ family directly in sextuple space, independently of any coordinate-based representation of tetrahedra.

\subsection{Same graph for the Liu–Joe family}

Besides the Adler and $R_1^+$ families, another important class of tetrahedra associated with the LEB is the Liu–Joe family \cite{LiuJoe-1994,LiuJoe-1995}. In the Liu–Joe method, a tetrahedron is first transformed affinely, then bisected using the LEB, and finally mapped back by the inverse transformation. For tetrahedra in the Liu–Joe family, this procedure yields exactly the same result as applying the LEB directly.

This lead us to identify a family of tetrahedra, called here the Liu-Joe family, consisting of the set of tetrahedra for which applying the Liu-Joe method yields the same result as directly applying the LEB method. In other words, a tetrahedron belongs to the Liu-Joe family if its bisection using the Liu-Joe method coincides with that of the LEB. This is equivalent to state the family as the set of tetrahedra for which every step in the bisection scheme proposed by Liu and Joe yields tetrahedra whose longest edge is listed first. When a tetrahedron belongs to the Liu--Joe family allows one to bypass the affine normalization step without altering the refinement outcome, leading to simpler and more efficient LEB algorithm.

Consider the tetrahedron $\textbf{x} = (80, 41, 40, 32, 31, 30)$, belonging to the Liu-Joe family. We apply the LEB to this tetrahedron and proceed in the same way as in Figure \ref{fig:arbol_grafo_Adler}, grouping the sextuples in the refinement tree according to the permutations applied to their children and further descendants. The resulting structure is shown in Figure \ref{fig:arbol_grafo_LJ}.

 \begin{figure}[!h]
\centering
\leavevmode 
\includegraphics[width=\textwidth]{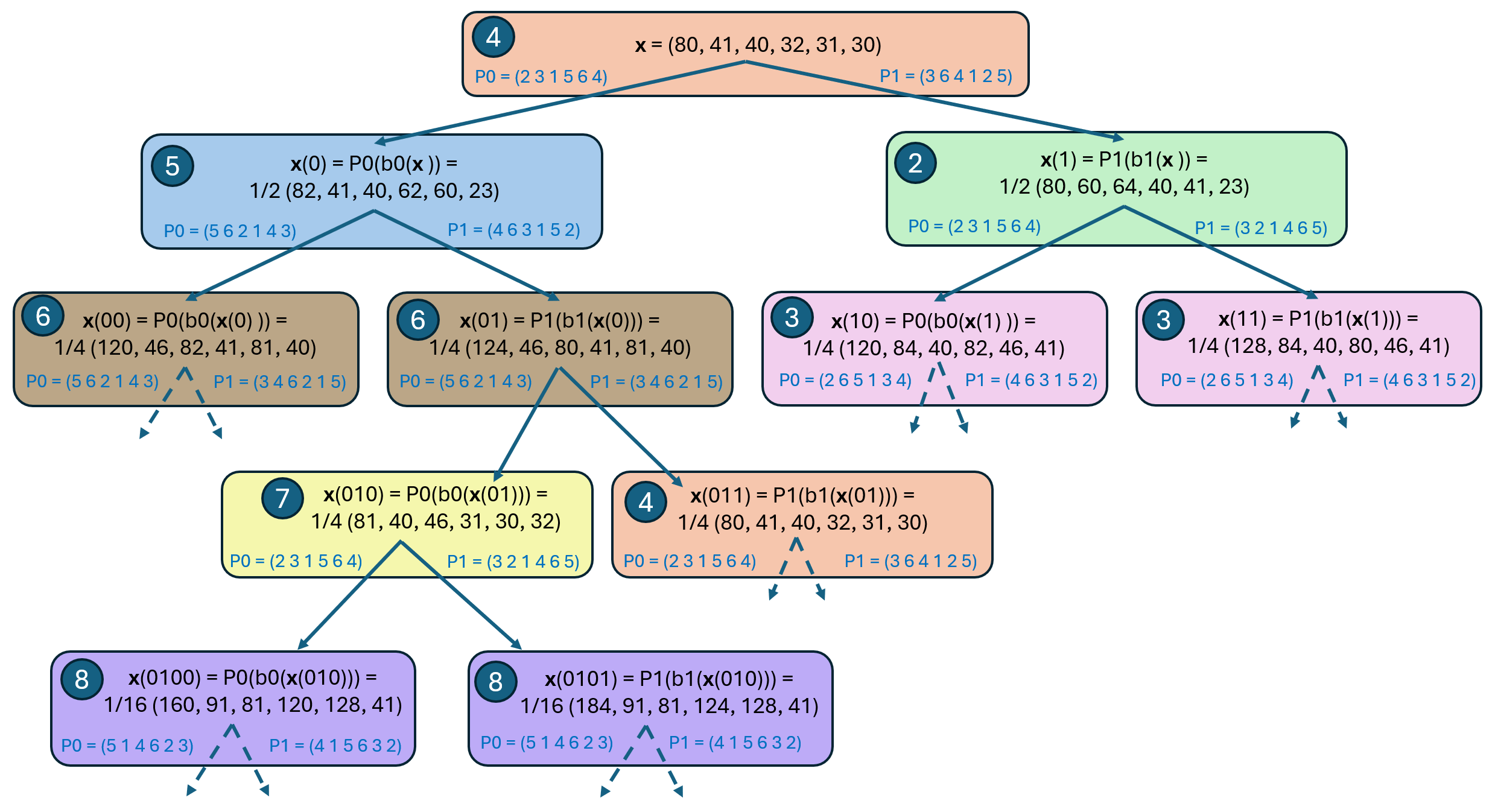}
\caption{
LEB for the Liu-Joe tetrahedron $\textbf{x}=(80,41,40,32,31,30)$. States are circled by an integer and colors are used to differentiate the states.}
\label{fig:arbol_grafo_LJ}
\end{figure}

As can be seen in this figure, the two children produced in the first LEB iteration apply the same pair of permutations corresponding to state \circled{4}. All subsequent descendants follow the same path in the graph starting from this state. Therefore, the graph that represents the LEB of tetrahedra in the $R_1^+$ family is exactly the same as the graph followed by the Liu-Joe family, but starting from state \circled{4} instead of state \circled{1}.

\subsection{Unified interpretation}

The previous results can be summarized as follows.

\begin{teor}
The directed graph with eight states shown in Figure \ref{fig:grafo} represents the LEB of both the $R_1^+$ family and the Liu–Joe family. For tetrahedra in $R_1^+$, the refinement starts at state \circled{1}, while for tetrahedra in the Liu–Joe family, it starts at state \circled{4}.
\end{teor}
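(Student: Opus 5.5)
The proof reduces to a finite verification combined with an invariance argument. The statement has two parts: the $R_1^+$ part is the preceding Lemma (eight states, initial state \circled{1}), so what must be added is that the Liu--Joe family is governed by the same graph entered at state \circled{4}. I would work throughout with sextuples normalized so that the largest component is first. Each state $k\in K$ of Figure \ref{fig:grafo} is a pair $(P_0^k,P_1^k)$, and I would associate to it the set $C_k\subset\R^6$ of sextuples $\textbf{x}$ for which $P_i^k(b_i(\textbf{x}))$ has its strict maximum in the first position for $i=0,1$. Each $C_k$ is an open polyhedral cone: $b_i$ and $P_i^k$ are linear, and ``component $1$ exceeds component $j$'' is a linear inequality. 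This matches the convexity phenomenon of Lemma \ref{lemmaConvex}.

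The core step is to show that the graph is closed under these cones. For every edge $(n,q,m)\in E$ I would prove the inclusion
\[
P_q^n\bigl(b_q(C_n\cap\mathcal{D}_n)\bigr)\subset C_m\cap\mathcal{D}_m ,
\]
where $\mathcal{D}_n$ is an auxiliary invariant region attached to state $n$. Inclusions of this kind among polyhedral sets defined by linear maps can be checked by verifying finitely many linear inequalities, either through extreme rays or by explicit combinations of the defining inequalities. Once these inclusions hold, induction on the length of the multi-index $\alpha$ shows that every $\textbf{x}(\alpha)$ lies in the region of the state reached by reading $\alpha$ along the graph. Hence the permutation applied at $\alpha$ is exactly the one prescribed by that state, and the infinite LEB tree collapses onto the eight-state graph.

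It remains to connect each family to its entry state. For $R_1^+$, the defining inequalities from \cite{AMC_2024} should give $R_1^+\subset C_1\cap\mathcal{D}_1$. For the Liu--Joe family, the defining property is that every Liu--Joe bisection step puts the longest edge first. I would first translate this into linear inequalities on the normalized sextuple, along the lines of the example $(80,41,40,32,31,30)$: the longest edge is first and its opposite $x_6$ is the smallest. I would then check directly that $b_0(\textbf{x})$ and $b_1(\textbf{x})$, after the permutations shown in Figure \ref{fig:arbol_grafo_LJ}, both fall in $C_4\cap\mathcal{D}_4$. The invariance step then finishes the argument, exactly as in the $R_1^+$ case.

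The main obstacle is choosing the invariant regions $\mathcal{D}_k$. The cones $C_k$ alone are not forward-invariant, since the figures show only finitely many levels, so one must find inequalities strong enough to be preserved by the eight transitions yet weak enough to contain both families. My first attempt would take $\mathcal{D}_k$ to be the image of the initial region under the linear maps along paths reaching $k$, namely the convex hull of all such images. If this set fails to stabilize, I would instead use inequalities expressing ``the ratio of the two longest edges stays below the $R_1^+$ threshold'', which are quadratic in edge lengths but linear in sextuple coordinates. Ties, where the MLEB is multivalued, occur only on boundaries of these regions. I would handle them by working with closures and noting that at least one admissible branch follows the graph.
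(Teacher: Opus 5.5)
The invariance skeleton you set up is the right shape for a rigorous proof: cones $C_k$, regions $\mathcal{D}_k$ that are preserved along the graph's edges, and induction on $|\alpha|$. It would also go beyond the paper, where the theorem only summarizes the preceding Lemma and the refinement trees computed for $(15,13,12,11,10,14)$ and $(80,41,40,32,31,30)$ in Figures \ref{fig:arbol_grafo_Adler} and \ref{fig:arbol_grafo_LJ}.

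As written, though, the proposal has a genuine gap exactly where the content lies. The regions $\mathcal{D}_k$ are never produced, so none of the inclusions $P_q^n(b_q(C_n\cap\mathcal{D}_n))\subset C_m\cap\mathcal{D}_m$ is actually checked. Your first candidate does not close the gap on its own. If $\mathcal{D}_k$ is the convex hull of all images of the initial region along paths reaching $k$, forward invariance is automatic. But the condition that remains, $\mathcal{D}_k\subset\overline{C_k}$, says precisely that every descendant in state $k$ has its maximum where the state prescribes, and that is the theorem itself. It becomes a finite verification only once you know the path maps are, up to positive scaling, finitely many. That is a finite-similarity-class statement such as the one proved for $R_1^+$ in \cite{AMC_2024}, and you would have to invoke it explicitly. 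Your second candidate, a bound on the ratio of the two longest edges, is not tested against any of the eight transitions. Nothing shows it forces the children into the cones $C_m$, which involve comparisons among all six components.

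The Liu--Joe step has a second, independent gap. That family is defined by an infinite condition: every step of the Liu--Joe scheme puts the longest edge first. You give no argument that finitely many linear inequalities cut it out, and the ones you read off the example are not enough. For instance, $\textbf{x}=(110,100,101,102,103,99)$ has $x_1$ largest and $x_6$ smallest. Yet $b_0(\textbf{x})=\frac14(110,400,292,300,412,396)$ has its maximum in position $5$, while $b_0(80,41,40,32,31,30)=\frac14(80,164,82,46,124,120)$ has it in position $2$. So the two sextuples already follow different patterns at the first level.

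The useful observation is that this half needs no inequalities at all. In the Liu--Joe scheme, the edge bisected at each node is dictated by the canonical tetrahedron through the affine normalization, not by the metric of $\textbf{x}$. Hence, by the very definition of the family, all its members follow one and the same LEB pattern. The Liu--Joe part of the theorem reduces to identifying that single pattern with the graph entered at state \circled{4}. This is a finite combinatorial check, because the canonical tetrahedron's LEB runs through only finitely many similarity classes, and it is what Figure \ref{fig:arbol_grafo_LJ} is meant to display. Invariant regions are therefore genuinely needed only for the $R_1^+$ half, and there they are exactly the part you still have to supply.
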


This theorem shows that both families share the same underlying combinatorial structure for the LEB. The distinction between them is purely determined by the initial configuration, not by the graph itself.

As in the case of $R_1^+$, the Liu–Joe family can also be interpreted geometrically as a union of convex regions in $\mathbb{R}^6$, each region corresponding to a fixed position of the longest edge and therefore to a fixed LEB pattern. Although the two families are governed by the same directed graph, the corresponding convex regions in sextuple space are, in general, distinct.

Overall, the directed graph formulation provides a unified and efficient framework for understanding and implementing the LEB for important families of tetrahedra, highlighting the advantages of the sextuple representation for both theoretical analysis and practical computation.

\section{Conclusions}

We have developed a new algebraic and combinatorial formulation of the Longest Edge Bisection (LEB) of tetrahedra based entirely on sextuple representations in $\mathbb{R}^6$. By encoding tetrahedra through squared edge lengths, the refinement equations become linear, eliminating the need for coordinate-based computations and simplifying both theoretical and algorithmic treatment.

A key contribution of this work is the formalization of the multiform nature of the LEB in three dimensions. When equal longest edges occur, the refinement rule becomes intrinsically multivalued. We addressed this phenomenon by introducing the concept of Multiform LEB (MLEB) and by defining bisection patterns as permutation-driven descriptions of the refinement process.

We proved that sextuples sharing a common LEB pattern form convex subsets of $\mathbb{R}^6$, providing a geometric interpretation of refinement behavior directly in sextuple space. Furthermore, for the $R_1^+$ family and the Liu–Joe family, we showed that the infinite refinement tree collapses into a finite directed graph with eight states. Both families share exactly the same graph structure, differing only in their initial state, revealing a unified combinatorial backbone underlying their refinement dynamics.

The directed-graph formulation offers a compact and computationally efficient framework for analyzing deep LEB refinements and similarity-class evolution. Beyond its theoretical relevance, this approach opens new perspectives for discovering additional tetrahedral families with finite refinement behavior and for studying the geometry of refinement regions in $\mathbb{R}^6$.

Future work will focus on explicitly characterizing the regions of sextuple space corresponding to these families and on extending the multiform framework to more general refinement strategies.


\begin{thebibliography}{00}




\bibitem{Adl} A. Adler. On the bisection method for triangles.
{\it Math. Comp.} \textbf{40} (1983), 571--574. 

 

\bibitem{HaKoKri_2014}
A. Hannukainen, S. Korotov, M. K\v r\'{\i}\v zek. On numerical regularity of the face-to-face longest-edge bisection algorithm for tetrahedral partitions. Science of Computer Programming \textbf{90} (2014), 34--41.


 

\bibitem{Kos} 
I. Kossaczk{\'{y}}.
A recursive approach to local mesh refinement in two and three dimensions.
{\it  J. Comp. App. Math.} \textbf{55} (1994), 275--288.



\bibitem{LiuJoe-1994} 
A. Liu, B. Joe. 
On the shape of tetrahedra from bisection. 
{\it Math. Comp.} \textbf{63} (1994), 141--154. 




\bibitem{LiuJoe-1995} 
A. Liu, B. Joe. 
Quality of local refinement of tetrahedral meshes based on bisection. 
{\it SIAM J. Sci. Comput.} \textbf{16} (1995), 1269--1291.
 


\bibitem{BeGiRo} 
G. Belda-Ferr\'in, E. Ruiz-Giron\'es, X. Roca, Estimating the Number of Similarity Classes for Marked Bisection in General Dimensions. In International Meshing Roundtable,  2023 (271--289). Cham: Springer Nature Switzerland.



\bibitem{RivLev} M.-C. Rivara, C. Levin. A 3D refinement algorithm 
suitable for adaptive and  multigrid techniques.
{\it Comm. Appl. Numer. Methods Engrg.} \textbf{8} 
(1992), 281--290. 




\bibitem{Sik} K. Sikorski.
A three dimensional analogue to the method of
bisections for solving nonlinear equations.
{\it Math. Comp.} \textbf{33} (1979), 722--738. 


\bibitem{SuTrTa} 
J.\,P. Su\'arez, A. Trujillo, T. Moreno. 
Computing the Exact Number of Similarity Classes in the Longest Edge Bisection of Tetrahedra
{\it Mathematics} \textbf{9}(12) (2021), 1447. 

\bibitem{AMC_2024}
A. Trujillo-Pino, J.\,P. Su\'arez,  M. A Padr\'on. Finite number of similarity classes in Longest Edge Bisection of nearly equilateral tetrahedra. {\it Applied Mathematics and Computation}, \textbf{472} (2024) 128631.

\bibitem{PadronTrujilloSuarez2025}
M.~Á.~Padrón Medina, A.~R.~Trujillo Pino, J.~P.~Suárez,
\newblock Convergence of the $R_1^{+}$ tetrahedra family in iterative Longest Edge Bisection,
\newblock {\em Mathematics and Computers in Simulation}, \textbf{238} (2025), pp.~555--5667.


\bibitem{Casado2015}
G. Aparicio,  L. Casado, E. Hendrix, B G.-T\'oth, I. Garcia
On the minimum number of simplex shapes in longest edge bisection refinement of a regular $n$-simplex.
{\it Informatica} \textbf{26}(1) (2015), 17--32.

\bibitem{WirDre}
K. Wirth,  A. Dreiding. 
Edge lengths determining tetrahedrons.
{\it Elem. Math.} \textbf{64}(4) (2009), 160--170.

\end{thebibliography}
\end{document}